\newtheorem{thm}{Theorem}[section]
\newtheorem{lem}[thm]{Lemma}
\newtheorem{prop}[thm]{Proposition}
\newtheorem{cor}[thm]{Corollary}
\newtheorem*{remark}{Remark}
\newtheorem*{remarks}{Remarks}
\newcommand{\A}{\mathcal{A}}
\newcommand{\eps}{\varepsilon}
\newcommand{\iPrime}{{\phantom{'}}}
\begin{document}

\title{Multistability of Phase-Locking in Equal-Frequency Kuramoto Models on Planar Graphs}

\author{Robin Delabays}
\affiliation{School of Engineering, University of Applied Sciences of Western Switzerland, CH-1950 Sion, Switzerland}
\affiliation{Section de Math\'ematiques, Universit\'e de Gen\`eve, CH-1211 Gen\`eve, Switzerland}
\author{Tommaso Coletta}
\affiliation{School of Engineering, University of Applied Sciences of Western Switzerland, CH-1950 Sion, Switzerland}
\author{Philippe Jacquod}
\affiliation{School of Engineering, University of Applied Sciences of Western Switzerland, CH-1950 Sion, Switzerland}

\keywords{Kuramoto model, multistability, winding numbers, vortex flows}
\pacs{05.45.-a, 05.45.Xt, 84.70.+p}

\date{8 September 2016, accepted 3 March 2017, published online 23 March 2017}

\begin{abstract}
 \it The number $\mathcal{N}$  of stable fixed points of locally coupled Kuramoto models depends on the topology of the network on which the model is defined. 
 It has been shown that cycles in meshed networks play a crucial role in determining $\mathcal{N}$, 
 because any two different stable fixed points differ by a collection of loop flows on those cycles. 
 Since the number of different loop flows increases with the length of the cycle that carries them, one expects $\mathcal{N}$ to be larger in meshed networks with longer cycles.
 Simultaneously, the existence of more cycles in a network means more freedom to choose the location of loop flows differentiating between two stable fixed points. 
 Therefore, $\mathcal{N}$ should also be larger in networks with more cycles. 
 We derive an algebraic upper bound for the number of stable fixed points of the Kuramoto model with identical frequencies, 
 under the assumption that angle differences between connected nodes do not exceed $\pi/2$.
 We obtain $\mathcal{N}\leq\prod_{k=1}^c\left[2\cdot{\rm Int}(n_k/4)+1\right]$, which depends both on the number $c$ of cycles and on the spectrum of their lengths $\{n_k\}$.
 We further identify network topologies carrying stable fixed points with angle differences larger than $\pi/2$, 
 which leads us to conjecture an upper bound for the number of stable fixed points for Kuramoto models on any planar network. 
 Compared to earlier approaches that give exponential upper bounds in the total number of vertices, 
 our bounds are much lower and therefore much closer to the true number of stable fixed points. 
\end{abstract}

\maketitle

\section{Introduction}

To analyze and describe phenomena of "temporal organization of matter", 
Kuramoto proposed a model of coupled oscillators defined by the following set of nonlinear differential equations~\cite{Kur75abbrv,Kur84}
\begin{eqnarray}\label{eq:Kuramoto}
\dot{\theta}_i = P_i - \sum_{j=1}^n K_{ij} \sin(\theta_i - \theta_j) \, , \quad i=1,\ldots n\, .
\end{eqnarray}
These coupled differential equations govern the dynamics of the angular coordinates $\theta_i$ of $n$ one-dimensional oscillators with natural frequencies $P_i$, 
interacting with one another via a coupling that is odd and periodic in angle differences. 
Since Kuramoto's original paper,~\cite{Kur75abbrv} the model has established itself as a cornerstone in the theory of synchrony and synchronization phenomena in coupled dynamical systems, 
in great part thanks to its simplicity, which allows for analytical treatments but still captures many key ingredients of these complex problems.~\cite{Str00,Ace05,Dor14}

For all-to-all coupling and for $n \to \infty$, a mean-field treatment based on the continuous limit approximation facilitates the analytical treatment of the Kuramoto model.
In particular it was observed early on that for all-to-all constant coupling, $K_{ij} \equiv K/n$, 
a stable synchronous state $\{ \theta_i^{(0)} \}$ emerges for $K > K_c$ with $\dot\theta_i^{(0)}-\dot\theta_j^{(0)}=0$, for a finite fraction of pairs of oscillators $(i,j)$. 
The critical coupling strength $K_c$ depends on the distribution $g(P)$ of natural frequencies $P_i$.~\cite{Kur84} 
Furthermore, full phase-locking with $\dot\theta_i^{(0)}-\dot\theta_j^{(0)}=0$, for all $i,j$ is possible for large enough $K$ provided $g(P)$ has compact support.~\cite{Erm85,Hem93}

It is the rule, however, rather than the exception that physical systems of coupled oscillators have short-range instead of all-to-all coupling. 
Short-range coupling renders the problem significantly harder, because the mean-field approach breaks down, but simultaneously more interesting, 
most notably because there may be more than one stable fixed point solution, contrarily to the model with all-to-all coupling.~\cite{Aey04,Mir05} 
As a matter of fact, determining the number of linearly stable fixed point solutions for locally coupled Kuramoto models is an unsolved mathematical problem. 
Fixed point solutions to Eq.~(\ref{eq:Kuramoto}) are determined by
\begin{eqnarray}\label{eq:Ksteady}
P_i = \sum_{j=1}^n K_{ij} \sin(\theta_i - \theta_j) \, , 
\end{eqnarray}
i.e. by a set of $n$ nonlinear transcendental equations. 
Eqs.~(\ref{eq:Kuramoto}) and (\ref{eq:Ksteady}) are invariant under $\theta_i \rightarrow \theta_i + \varphi_0$, 
so that it is trivial to construct infinitely many fixed points starting from a given fixed point. 
Solutions are however usually counted modulo constant angle shifts. 
Upper bounds for the number ${\cal N}$ of fixed point solutions of Eq.~(\ref{eq:Kuramoto}) exist, which are based on algebraic geometrization methods.~\cite{Meh15,Che16} 
Algebraic geometrization consists in rewriting Eq.~(\ref{eq:Kuramoto}) as polynomials in trigonometric functions of angles. 
The method provides a list of candidate fixed points of Eq.~\eqref{eq:Kuramoto} whose stability 
needs to be confirmed or discarded numerically. 
This method guarantees that all fixed points are found but leads to a bound on the number of stable fixed points which is exponential in the number $n$ of nodes.
Exponential bounds are easy to understand qualitatively since new fixed point solutions to Eq.~(\ref{eq:Ksteady}) can be obtained in principle from any known
fixed point solution via $\theta_i^{(0)}  - \theta_j^{(0)}  \rightarrow \pi-(\theta_i^{(0)}  - \theta_j^{(0)} )$ for any edge $\langle ij\rangle$.~\cite{Rog04} 
The latter substitution leads to an exponential number $\propto 2^{n}$ of fixed point solutions. 
It is expectable that much better bounds exist. 
As a matter of fact, the vast majority of fixed point solutions obtained with the above substitution are unstable and thus of lesser physical significance.

Another way to obtain a bound on ${\cal N}$ is to rely on brute-force numerics and calculate how many different fixed points one converges to, 
starting from a large number of initial conditions $\{\theta_i(t=0) \}$. 
This method is however limited, first, by the small volume of the basin of attraction of some fixed points,~\cite{Wil06,Men14} which require a finite resolution in angle space and second, 
by the exponentially large number $(2 \pi/\Delta)^{n}$ of initial conditions needed to have a given angular resolution of $\Delta$ in angle space. 
Moderate resolution and number of oscillators, $\Delta = 0.5$ and $n=20$, already require an impractically large number $\simeq 10^{21}$ of initial conditions. 
Clearly, a novel approach is needed to estimate ${\cal N}$.

Below we present such a novel approach and derive an upper bound on the number of stable fixed point solutions to Eq.~\eqref{eq:Kuramoto} 
defined on a planar network with equal natural frequencies, $P_i \equiv \Omega$. 
That bound is generally valid, under the only assumption that angle differences in the stable fixed points do not exceed $\pi/2$. 
We further identify networks where stable fixed points exist which violate that latter condition and conjecture an upper bound on ${\cal N}$ valid in that case. 
Our method is based on the theorem of Refs.~\onlinecite{Dor13,Del16} which shows that, if there is more than one fixed point solution to Eq.~(\ref{eq:Kuramoto}), 
these fixed point solutions differ by a collection of loop flows (to be defined below) around the cycles of the network. 
Inspired by this result, we explore the possibility to generate such loop flows around each and every cycle in the network,
and investigate how many of the fixed points obtained in this way remain linearly stable.

In this manuscript, we amplify on our earlier work~\cite{Del16} which gave an algebraic upper bound ${\cal N} \le 2 \cdot {\rm Int}[n/4]+1$ 
for the number of stable fixed points of Eq.~(\ref{eq:Kuramoto}) on a single-cycle network of length $n$, where ${\rm Int}[\cdot]$ denotes the integer part (floor) function. 
Ref.~\onlinecite{Del16} follows a rather long series of investigations which we briefly summarize here. 
The question of the number of stable fixed point solutions to systems of coupled oscillators dates back at least to 
Korsak~\cite{Kor72} and Tavora and Smith~\cite{Tav72} in the context of electrical power flow problems, 
which are related to the Kuramoto model in the lossless line approximation (see e.g. Refs.~\onlinecite{Del16,Ber00}).
Korsak explicitly calculated different, linearly stable fixed points for a simple network, 
which differ from one another by a circulating electrical current around a cycle formed by the network. 
This was followed by a number of investigations, most of them numerical or on small networks in the context of electrical power grids, 
see e.g. Refs.~\onlinecite{Ska80,Ara81,Bai82,Tam83,Klos91} for a necessarily incomplete, but representative list of the large literature in that direction.
Bounds for the number of fixed point solutions in the spirit of the substitution argument given after Eq.~(\ref{eq:Ksteady}) where constructed in 
Refs.~\onlinecite{Bai82,Ngu14,Meh15}, which gave exponential upper bounds for the number of stable fixed points of Kuramoto/power flow models. 
Ref.~\onlinecite{Meh15} observed numerically, however, that the number of stable fixed point solutions is much smaller than this bound.

Tavora and Smith~\cite{Tav72} realized that circulating loop flows such as those discussed by Korsak~\cite{Kor72} can only take discrete values, because angles are defined modulo $2 \pi$. 
As a matter of fact, single-valuedness of angle coordinates requires that summing over angle differences around any cycle in the network on which Eqs.~(\ref{eq:Kuramoto})
and (\ref{eq:Ksteady}) are defined must give an integer multiple of $2 \pi$ in order to get back to the initial angle, modulo $2 \pi$.
This defines topological winding numbers $q_k$,
\begin{eqnarray}\label{windingNumber}
 q_k &=& \left(2\pi\right)^{-1}\sum_{\ell=1}^{n_k} \Delta_{\ell+1,\ell}\, \in \, \mathbb{Z}\, ,
\end{eqnarray}
which are integers. The the sum runs over all $n_k$ nodes (locating individual oscillators) around any (the $k^{\rm th}$) cycle in the network on which the model is defined and node 
indices on a cycle satisfy $n_k+1 \rightarrow 1$. 
We introduce $\Delta_{ij}\coloneqq\theta_i-\theta_j$ taken modulo $2\pi$ in the interval $(-\pi,\pi]$. 
We will alternatively use the notation $\Delta_e=\Delta_{ij}$, where $e=\langle ij\rangle$ is a single edge index. 
The topological meaning of $q_k$ is obvious, as it counts the number of times the oscillators angles wind around the origin in the complex plane as one goes around the $k^{\rm th}$ cycle. 
This latter observation eventually led to the characterization of circulating loop flows with discrete topological winding numbers.~\cite{Jan03,Erm85,Del16,Col16b}
Using winding numbers, Rogge and Aeyels~\cite{Rog04} obtained an algebraic upper bound $\mathcal{N} \le 2\cdot {\rm Int}[n/4]+1$ 
for the number of stable fixed point solutions with any angle difference in a Kuramoto model on a $n$-node ring with unidirectional nearest-neighbor couplings.
The same upper bound has been calculated by Ochab and G\'ora~\cite{Och10} in a nonoriented $n$-node Kuramoto ring with 
nearest-neighbor coupling, under the condition that all angle differences are smaller than $\pi/2$. 
This upper bound is reached when the coupling strength goes to infinity, equivalently corresponding to $P_i=0$, for all $i$ in 
Eqs.~(\ref{eq:Kuramoto}) and (\ref{eq:Ksteady}), i.e. to identical oscillators.
Refs.~\onlinecite{Rog04,Och10} on single-cycle networks were complemented by our recent work,~\cite{Del16} which showed that the bound $\mathcal{N} \le 2\cdot {\rm Int}[n/4]+1$ 
is still valid for nonoriented coupling, even when one angle difference exceeds $\pi/2$, and that no stable fixed point exists
with more than one angle difference exceeding $\pi/2$ for single-cycle networks. 
These results are summarized in Table~\ref{tab:bounds}.

\begin{table}
\begin{tabular}{|c|c|c|}
 \hline
 Reference & Number of fixed points bounded by & Conditions on graph \\
 \hline
 \hline
 \onlinecite{Rog04} & $2\cdot{\rm Int}(n/4)+1$ & Single-cycle, unidirectional coupling \\
 \hline
 \onlinecite{Och10} & $2\cdot{\rm Int}(n/4)+1$ & Single-cycle, $\Delta_{ij}<\pi/2$, $\forall \langle ij\rangle$ \\
 \hline
 \onlinecite{Del16} & $2\cdot{\rm Int}(n/4)+1$ & Single-cycle \\
 \hline
 \onlinecite{Meh15}, \onlinecite{Che16} & $\propto 2^n$ & Any network \\
 \hline
 \hline
 Present paper & $\prod_{k=1}^c\left[2\cdot{\rm Int}(n_k/4)+1\right]$ & Planar, no single-shared edge, $P_i=P_0$, $\forall i$ \\
 \hline
\end{tabular}
\caption{Existing bounds on the number of stable fixed points of the Kuramoto model.}
\label{tab:bounds}
\end{table}

We finally note that Ref.~\onlinecite{Til11}  investigated different stable fixed point solutions, which were later classified according
to two integers, $q$ (the winding number mentioned above) and $\ell$ (the number of angle differences larger than $\pi/2$) in Ref.~\onlinecite{Roy12}. 
Ref.~\onlinecite{Del16} put this classification into perspective as it showed that, based on a theorem due to Taylor~\cite{Tay12} (see also Ref.~\onlinecite{Do12}), 
only $\ell=0,1$ is possible on a single-cycle network. 
In a somewhat different but related direction of investigation, Wiley et al.~\cite{Wil06} investigated the size of basins of attraction for 
synchronous fixed point solutions with different $q$ in a cycle network of identical oscillators and proposed that they are smaller at higher $q$.

We think the present manuscript makes a significant step forward in generalizing the bound found in and the method of Ref.~\onlinecite{Del16} to multi-cycle, planar networks. 
The problem here is threefold. 
First, one must account for an effective coupling between neighboring cycles with common edges. 
Along the latter, loop flows are superimposed, which couples two conditions expressed in Eq.~\eqref{windingNumber} for two different cycles. 
Second, this superposition leads to increased or decreased angle differences, which sometimes jeopardizes the stability of the resulting fixed point solutions. 
Third, in meshed networks, one has to revisit the one-to-one relation between number of stable fixed points and number of possible winding numbers on which the strategy of 
Ref.~\onlinecite{Del16} is based. 
Below, we present a method that overcomes these obstacles, which allows us to generalize the bound found in Refs.~\onlinecite{Rog04,Och10,Del16}.

The paper is organized as follows: 
In Section~\ref{sec:model} we formally present the model we investigate. 
We derive our upper bound on the number of stable fixed points of Eq.~\eqref{eq:Ksteady} with $P_i\equiv\Omega$ in Section~\ref{sec:bound}. 
In Section~\ref{sec:anglediff}, we give two examples of networks where fixed points with some angle differences larger than $\pi/2$ are stable. 
A conclusive discussion is given in Section~\ref{sec:conclusion}.

\section{The model}\label{sec:model}
Let $G$ be a connected planar graph with $n$ vertices and $m$ edges, and let $P_i\in\mathbb{R}$, for $i=1,...,n$, be the natural frequencies of one-dimensional oscillators located at each vertex. 
A variable angle $\theta_i\in\mathbb{R}$, $i=1,...,n$, is associated to each vertex and a coupling
constant $K\geq0$ to every edge $\langle ij\rangle$. 
We are interested in the stationary points of the Kuramoto model defined on $G$, whose dynamics is defined by 
\begin{align}\label{eq:kuramoto_general}
 \dot{\theta_i}&=P_i-K\sum_{j\sim i}\sin\left(\theta_i-\theta_j\right)\, ,
\end{align}
where $j\sim i$ indicates that the sum is taken over the neighboring vertices $j$ of $i$. 
From now on, we assume identical frequencies, $P_i=\Omega$, for all $i$, and consider the angles in a rotating frame, $\theta_i\to\Omega t+\theta_i$. 
Eq.~\eqref{eq:kuramoto_general} then reads
\begin{align}\label{eq:kuramoto_red}
 \dot{\theta_i}&=-K \sum_{j\sim i} \,\sin\left(\theta_i-\theta_j\right)\, . 
\end{align}

The linear stability of a fixed point $\{\theta_i^{(0)}\}$ of Eq.~\eqref{eq:kuramoto_red} is assessed by considering small perturbations about this fixed point, 
$\theta_i^{(0)}\to\theta_i^{(0)}+\delta\theta_i$ and linearizing the time evolution of $\delta\theta_i$. 
One obtains 
\begin{align*}
 \delta\dot{\theta_i}&=-\sum_{j\sim i}K\cos(\theta_i^{(0)}-\theta_j^{(0)})(\delta\theta_i-\delta\theta_j), \quad i=1,...,n\, .
\end{align*}
The stability of the fixed point is then determined by the \emph{Lyapunov exponents} which are the eigenvalues of the \emph{stability matrix} $M(\{\theta_i^{(0)}\})$, 
\begin{align}\label{eq:stab_matrix}
 M_{ij}&\coloneqq\left\{
 \begin{array}{ll}
  K\cos(\theta_i^{(0)}-\theta_j^{(0)})\, , & \text{if }i\sim j\, ,\\
  \displaystyle -\sum_{k\sim i}K\cos(\theta_i^{(0)}-\theta_k^{(0)})\, , & \text{if }i=j\, ,\\
  0\, , & \text{otherwise.}
 \end{array}
 \right.
\end{align}
The matrix $M$ is real symmetric, its spectrum is then real. 
Obviously, $\sum_iM_{ij}=\sum_jM_{ij}=0$, implying that one eigenvalue is always zero, $\lambda_1=0$. 
A fixed point is then stable if and only if the stability matrix $M$ is negative semidefinite, i.e. when
its largest nonvanishing eigenvalue is negative, $\lambda_2<0$ .

Any fixed point of Eq.~\eqref{eq:kuramoto_red} satisfies
\begin{align}\label{eq:kirchhoff}
 \sum_{j\sim i}\sin(\Delta_{ij})&=0\, ,\quad i=1,...,n\, ,
\end{align}
where $\Delta_{ij}$ is defined after Eq.~\eqref{windingNumber}.
An obvious fixed point is given by $\Delta_{ij}=0$ for all $\langle ij\rangle$, i.e. all the angles are equal, but  
other fixed points may exist.~\cite{Kor72,Jan03,Rog04,Och10}
A simple example is provided by the single-cycle networks considered in 
Refs.~\onlinecite{Kor72,Jan03,Rog04,Och10,Del16}, where Eq.~(\ref{eq:kirchhoff}) can be
satisfied if $\sin(\Delta_{i,i-1})=\sin(\Delta_{i+1,i})$. When $\sin(\Delta_{i+1,i}) \ne 0$, a finite loop
flow of magnitude $K \eps = K \sin(\Delta_{i+1,i})$ circulates around the cycle. 
Because winding numbers are integers [Eq.~(\ref{windingNumber})], $\eps$ and $\Delta_{i+1,i}$ can take only discrete values. 
Ref.~\onlinecite{Col16b} made a connection between that discreteness and quantization of circulation around a superfluid vortex. 
Accordingly loop flows are alternatively referred to as {\it vortex flows}.

Vortex flows are not limited to single-cycle networks. 
According to Refs.~\onlinecite{Dor13,Del16}, two fixed points differ by a collection of loop flows on any of the different cycles of $G$.
For a vertex $i$ on a given cycle of $G$, such a flow increases the interaction strength between $i$ and one of its neighbors on the cycle, $\sin(\Delta_{ij})\to\sin(\Delta_{ij})+\delta$, 
but decreases the interation strength with its other neighbor in the cycle by the same amount, $\sin(\Delta_{ik})\to\sin(\Delta_{ik})-\delta$. 
Eq.~\eqref{eq:kirchhoff} is then still satisfied.

For any connected graph $G$ with $n$ vertices and $m$ edges, we can define a set of \emph{fundamental cycles}. 
Let $T\subset G$ be a spanning tree of $G$ and $r$ an arbitrary vertex of $G$ which we call the \emph{root}. 
Each edge of $W\coloneqq G\setminus T$ closes a cycle of $G$. 
The fundamental cycle associated to an edge $\langle ij\rangle\in W$ is the set of edges obtained by 
concatenation of the unique path on $T$ from $r$ to $i$, the edge $\langle ij\rangle$ and the unique path on $T$ from $j$ to $r$, with the prescription that,
if the resulting path goes twice through an edge, then we remove the latter from the path. 
This gives us $c=m-n+1$ fundamental cycles which form together a cycle basis. 
Any cycle of $G$ is a linear combination of fundamental cycles (see Ref.~\onlinecite{Big93} for more details). 
If the graph $G$ is planar, a wise choice of the spanning tree $T$ gives a set of fundamental cycles defined by the edges surrounding each face of the embedding of $G$ in $\mathbb{R}^2$. 
From now on, we focus on connected planar graphs and choose this fundamental cycle basis. 
We do not consider tree-like parts on $G$, since 
angle differences on such parts are uniquely defined and have no influence on the dynamics of the 
rest of the network. Tree-like parts can therefore be absorbed in the node that connects
them to the meshed part of $G$, which is why we neglect them in the following.

Let $\mathcal{L}_1,...,\mathcal{L}_c$ be the sets of edges composing each fundamental cycle. 
For $k=1,...,c$, we define $m_k$ the number of edges exclusively on cycle $k$ and for $k\neq\ell$, $m_{k\ell}$ is the number of edges common to the two cycles $k$ and $\ell$. 
Because we restrict ourselves to planar graphs, no edge is common to more than two cycles.
We then define $n_k\coloneqq m_k+\sum_{\ell\neq k}m_{k\ell}$ the number of elements in $\mathcal{L}_k$. 
We define $K\eps_k$, the value of the vortex flow on cycle $\mathcal{L}_k$, for $k=1,...,c$, where $\eps_k\in\mathbb{R}$ is the \emph{vortex flow parameter}. 
Any vortex flow can be uniquely written as a linear combination of vortex flows on fundamental cycles.

We arbitrarily fix an orientation to each edge and each cycle independently. 
We can now define the \emph{edge-cycle incidence matrix}, $S\in\mathbb{R}^{m\times c}$ as follows
\begin{align*}
 S_{ek}\coloneqq\left\{
 \begin{array}{ll}
  +1 & \text{if }e\in\mathcal{L}_k\text{ with the same orientation;}\\
  -1 & \text{if }e\in\mathcal{L}_k\text{ with opposite orientation;}\\
  0 & \text{if }e\notin\mathcal{L}_k\, ,
 \end{array}
 \right.
\end{align*}
where $e$ is an edge index and $k$ a cycle index. 
We define the interaction strength on edge $\langle ij\rangle$ as $P_{ij}\coloneqq K\sin(\Delta_{ij})$.
As the columns of $S$ are the edge vectors of the fundamental cycles, the interaction strength on any edge $e=\langle ij\rangle$ can be written as 
\begin{align}
 P_{ij}&=K\sum_{k\colon e\in\mathcal{L}_k} S_{ek}\eps_{k}\, ,
\end{align}
where the sum is taken over all cycles containing edge $e$. 
The magnitude of the interaction strength on each edge is bounded by the coupling on that edge, therefore, for every edge $e=\langle ij\rangle$,
\begin{align*}
 -K\leq P_{ij}\leq K\implies -1\leq\sum_{k\colon e\in\mathcal{L}_k}S_{ek}\eps_k\leq1\, ,
\end{align*}
which defines a parameter domain $\mathcal{D}_G\subset\mathbb{R}^c$ of possible values for the vector of vortex flows $\vec{\eps}\coloneqq(\eps_1,...,\eps_c)$.

From now on, we consider stable fixed points of Eq.~\eqref{eq:kuramoto_red} with all angle differences in $[-\pi/2,\pi/2]$. 
Let $\vec{\eps}\in\mathcal{D}_G$ be the vortex flow vector of a fixed point of Eq.~\eqref{eq:kuramoto_red}. 
The angle difference on edge $e=\langle ij\rangle$ is $\Delta_{ij}\in [-\pi/2,\pi/2]$, and is given by
\begin{align}\label{eq:angleDiff}
 \Delta_{ij}(\vec{\eps})&=\arcsin\left(P_{ij}(\vec{\eps})/K\right)=\arcsin\left(\sum_{k\colon e\in\mathcal{L}_k}S_{ek}\eps_k\right)\, .
\end{align}
The angle variables $\theta_i$ are single-valued, therefore the sum of angle differences (taken modulo $2\pi$) around any cycle must be an integer multiple of $2\pi$, 
\begin{align}\label{eq:quantization}
 \sum_{e\in\mathcal{L}_k}\Delta_e(\vec{\eps})&=2\pi q_k\, ,\quad k=1,...,c\, ,
\end{align}
which defines the \emph{winding number} $q_k\in\mathbb{Z}$ on cycle $\mathcal{L}_k$.
Eq.~(\ref{eq:quantization}) is equivalent to Eq.~(\ref{windingNumber}), with the sum being taken over edges instead of vertices.  
It states in particular that to any fixed point of Eq.~\eqref{eq:kuramoto_red} we can associate a 
point $\vec{\eps}\in\mathcal{D}_G$, but that not any $\vec{\eps}$ gives a fixed point of Eq.~\eqref{eq:kuramoto_red}.

\begin{remark}
 If a fixed point has all angle differences in $[-\pi/2,\pi/2]$, then its stability matrix is diagonal dominant. 
 According to Gershgorin's Circle Theorem,~\cite{Hor86} the stability matrix is then negative semidefinite and the fixed point is stable. 
 This was previously noted in Ref.~\onlinecite{Tav72}.
\end{remark}

Define the \emph{quantization function} 
\begin{align*}
 \vec{\A}&=(\A_1,...,\A_c)\colon\mathcal{D}_G\to\mathbb{R}^c\, ,
\end{align*}
whose components are the sums of angle differences around the corresponding fundamental cycles, 
\begin{align*}
 \A_k(\vec{\eps})&\coloneqq\sum_{e\in \mathcal{L}_k}\Delta_e(\vec{\eps})
 =\sum_{e\in \mathcal{L}_k}\arcsin\left(\sum_{\ell\colon e\in\mathcal{L}_\ell}S_{e\ell}\eps_\ell\right)\, ,
 \quad k=1,...,c\, .
\end{align*}

\begin{remark}
 We call $\vec{\A}$ quantization function because Eq.~\eqref{eq:quantization} leads to $\A_k=2\pi q_k$, which is analogous to quantization conditions in quantum mechanics 
 such as quantization of circulation around vortices in superfluids or of fluxes through superconducting rings. 
 Note however that there is no quantum mechanics in the Kuramoto model. 
\end{remark}

Define next the level sets $\mathcal{Q}_k^{q}\coloneqq\left\{\vec{\eps}\in\mathcal{D}_G\colon\A_k(\vec{\eps})=2\pi q\right\}$, for $k=1,...,c$ and $q\in\mathbb{Z}$. 
It is easy to check that for any $k$, there are at most
\[
 \mathcal{N}_k=2\cdot{\rm Int}[n_k/4]+1
\]
distinct level sets for $\A_k$ in the domain $\mathcal{D}_G$.~\cite{Del16} 
Define finally the intersection of level sets
\begin{align}\label{eq:qq}
 \mathcal{Q}(\vec{q})\coloneqq\bigcap_{k=1}^c\mathcal{Q}_k^{q_k}\, ,\quad \vec{q}=(q_1,...,q_c)\in\mathbb{Z}^c\, .
\end{align}

These definitions bring us to the following proposition, which will be useful in Sec.~\ref{sec:bound}. 

\begin{prop}\label{prop:1to1}
 Stable fixed point solutions of Eq.~\eqref{eq:kuramoto_red} with winding vectors $\vec{q}=(q_1,...,q_c)$ and angle differences in $[-\pi/2,\pi/2]$ 
 are in a one-to-one correspondence with vortex flow vectors $\vec{\eps}\in\mathcal{Q}(\vec{q})$. 
\end{prop}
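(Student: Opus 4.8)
The plan is to build explicit maps in both directions and check that they are mutually inverse. Throughout I would keep the edge orientations, the spanning tree $T$, and the fundamental cycle basis fixed, and use the identification of the space of divergence-free (antisymmetric) edge flows with the cycle space of $G$, for which the columns of $S$ form a basis.

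\emph{From a fixed point to $\mathcal{Q}(\vec q)$.} Given a stable fixed point $\{\theta_i^{(0)}\}$ with all angle differences in $[-\pi/2,\pi/2]$ and winding vector $\vec q$, the interaction strengths $P_e = K\sin(\Delta_e)$ satisfy Kirchhoff's equations \eqref{eq:kirchhoff}, i.e.\ the edge flow $(P_e)_e$ is divergence-free, hence lies in the cycle space. Expanding it in the basis given by the columns of $S$ yields a \emph{unique} $\vec\eps\in\mathbb{R}^c$ with $P_e = K\sum_{k\colon e\in\mathcal{L}_k}S_{ek}\eps_k$, and $|P_e|\le K$ forces $\vec\eps\in\mathcal{D}_G$. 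Since each $\Delta_e\in[-\pi/2,\pi/2]$ and $\arcsin$ inverts $\sin$ there, $\Delta_e = \Delta_e(\vec\eps)$ exactly as in \eqref{eq:angleDiff}; the quantization conditions \eqref{eq:quantization} then read $\A_k(\vec\eps) = 2\pi q_k$ for all $k$, i.e.\ $\vec\eps\in\mathcal{Q}(\vec q)$. Note that stability is automatic by the Remark, so here the angle-difference hypothesis carries all the weight.

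\emph{From $\mathcal{Q}(\vec q)$ to a fixed point.} Conversely, given $\vec\eps\in\mathcal{Q}(\vec q)$, set $\Delta_e := \Delta_e(\vec\eps)\in[-\pi/2,\pi/2]$. The edge flow with components $\sin(\Delta_e) = \sum_{k\colon e\in\mathcal{L}_k}S_{ek}\eps_k$ is a linear combination of columns of $S$, hence divergence-free, which is precisely Kirchhoff's law \eqref{eq:kirchhoff}. The remaining step is to integrate the $\Delta_e$ into single-valued angles: because $\vec\eps\in\mathcal{Q}(\vec q)$, the signed sum of the $\Delta_e$ around each \emph{fundamental} cycle equals $2\pi q_k$, and since the fundamental cycles span the cycle space, the signed sum around \emph{any} cycle of $G$ is then an integer multiple of $2\pi$. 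Setting $\theta_r^{(0)} = 0$ at the root and propagating the $\Delta_e$ along $T$ therefore produces well-defined angles $\{\theta_i^{(0)}\}$ whose canonical edge differences in $(-\pi,\pi]$ are exactly the prescribed $\Delta_e(\vec\eps)$ (on a non-tree edge the raw difference may differ by a multiple of $2\pi$, but its sine is unchanged, so Kirchhoff still holds). This is a fixed point of \eqref{eq:kuramoto_red}, stable by the Remark, with winding vector $\vec q$ by construction.

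\emph{Mutual inversion and the main obstacle.} Composing the first map followed by the second reconstructs a state with the same angle difference on every edge, hence equal to the original fixed point up to a global shift $\theta_i\to\theta_i+\varphi_0$; since fixed points are counted modulo such shifts, the two coincide. Composing the second followed by the first returns the same $\vec\eps$, by uniqueness of the expansion in the basis $S$. Hence the two maps are inverse bijections. I expect the main obstacle to be the integration step in the reverse direction --- verifying that quantization imposed on the $c$ fundamental cycles propagates to \emph{every} cycle of $G$ --- together with the routine but error-prone bookkeeping of edge orientations and of the signs $S_{ek}$ needed so that divergence-freeness of the flow $(P_e)_e$ and the equations \eqref{eq:kirchhoff} become literally the same statement.
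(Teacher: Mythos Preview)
Your proposal is correct and follows essentially the same approach as the paper: both directions rely on the fact that the columns of $S$ form a basis of the cycle space, which gives existence and uniqueness of the vortex-flow decomposition. The paper's own proof is more terse---it invokes Eq.~\eqref{eq:angleDiff} to pass from $\vec\eps$ to a fixed point without spelling out the integration step, and proves injectivity directly from $S(\vec\eta-\vec\mu)=0$---whereas you make the reconstruction of the angles from the $\Delta_e$ explicit via the spanning tree and the quantization conditions; this extra care is a genuine addition, since the paper leaves that step implicit.
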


\begin{proof}
 Eq.~\eqref{eq:angleDiff} associates a fixed point of Eq.~\eqref{eq:kuramoto_red} to each $\vec{\eps}\in\mathcal{Q}(\vec{q})$. 
 By assumption, this fixed point has all angle differences in $[-\pi/2,\pi/2]$ and winding vector $\vec{q}$ with components given by Eq.~\eqref{eq:quantization}. 
 Any fixed point of Eq.~\eqref{eq:kuramoto_red} is a collection of vortex flows, i.e. a point $\vec{\eps}\in\mathcal{D}_G$. 
 By definition again, this point belongs to $\mathcal{Q}(\vec{q})$. 
 Assume that $\mathcal{Q}(\vec{q})$ is not empty and let $\vec{\eta},\vec{\mu}\in\mathcal{Q}(\vec{q})$, 
 such that $\Delta_e(\vec{\eta})=\Delta_e(\vec{\mu})$, for all edges $e$. 
 This means that for each edge $e$, 
 \begin{align*}
  \sum_{k=1}^cS_{ek}\eta_k&=\sum_{k=1}^cS_{ek}\mu_k\, ,
 \end{align*}
 and in matricial form
 \begin{align*}
  S\cdot\left(\vec{\eta}-\vec{\mu}\right)&=0\, .
 \end{align*}
 As the columns of $S$ correspond to a cycle basis of $G$, they are linearly independent, implying $\vec{\eta}=\vec{\mu}$. 
 This concludes the proof of the one-to-one correspondence. 
\end{proof}

\section{Bound on the number of stable fixed points}\label{sec:bound}
In this section, we show that for a given winding vector, $\vec{q}\in\mathbb{Z}^c$, the corresponding level sets $\mathcal{Q}_k^{q_k}$, $k=1,...,c$ intersect at most at one point. 
We then derive some bounds on the number of stable fixed points of Eq.~\eqref{eq:kuramoto_red} with all angle differences less than $\pi/2$. 
The following theorem is key to obtain this bound, which we later derive in Corollary~\ref{cor:main}. 

\begin{thm}\label{thm:main_thm}
 Let $G$ be a planar graph and $\vec{\A}\colon\mathcal{D}_G\to\mathbb{R}^c$ the quantization function associated with its fundamental cycle basis defined on the faces of the embedding of $G$
 in $\mathbb{R}^2$. 
 For a given winding vector $\vec{q}\in\mathbb{Z}^c$, the intersection, Eq.~\eqref{eq:qq}, of its level sets is either a single point or empty.
\end{thm}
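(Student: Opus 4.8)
The plan is to show that the quantization function $\vec{\A}$ is injective on $\mathcal{D}_G$; the theorem then follows immediately, since $\mathcal{Q}(\vec q) = \vec{\A}^{-1}(2\pi\vec q)$ can contain at most one point. To prove injectivity I would study the Jacobian $J(\vec\eps) = D\vec{\A}(\vec\eps)$ and show it is everywhere nonsingular on $\mathcal{D}_G$, then upgrade this local statement to a global one using convexity of $\mathcal{D}_G$.

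First I would compute the entries of $J$. Since $\A_k(\vec\eps) = \sum_{e\in\mathcal{L}_k}\arcsin\bigl(\sum_{\ell}S_{e\ell}\eps_\ell\bigr)$, the chain rule gives
\begin{align*}
 J_{k\ell}(\vec\eps) &= \sum_{e}S_{ek}\,w_e(\vec\eps)\,S_{e\ell}\, ,\qquad
 w_e(\vec\eps) \coloneqq \frac{1}{\sqrt{1-\bigl(\sum_{\ell}S_{e\ell}\eps_\ell\bigr)^2}}\, .
\end{align*}
In matrix form, $J(\vec\eps) = S^\top W(\vec\eps)\, S$, where $W(\vec\eps)$ is the diagonal $m\times m$ matrix with entries $w_e(\vec\eps)$. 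On the interior of $\mathcal{D}_G$ each angle difference satisfies $|\Delta_e|<\pi/2$, so $w_e(\vec\eps)>0$ and $W$ is positive definite; since the columns of $S$ are linearly independent (they form a cycle basis, as used already in the proof of Proposition~\ref{prop:1to1}), $J(\vec\eps) = S^\top W S$ is symmetric positive definite, hence nonsingular. The boundary of $\mathcal{D}_G$, where some $w_e$ blows up, has to be handled separately — either by noting that $\arcsin$ extends continuously to $[-\pi/2,\pi/2]$ and treating the closed domain by a limiting argument, or by observing that level sets through boundary points are covered by the interior analysis together with continuity.

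To pass from "$J$ nonsingular everywhere" to "$\vec{\A}$ injective", I would use that $\mathcal{D}_G$ is convex (it is defined by the linear inequalities $-1\le \sum_{\ell}S_{e\ell}\eps_\ell\le 1$). Given $\vec\eta\neq\vec\mu$ in $\mathcal{D}_G$, consider the segment $\vec\gamma(t) = (1-t)\vec\eta + t\vec\mu$ and the scalar function $\phi(t) = (\vec\mu-\vec\eta)^\top\bigl(\vec{\A}(\vec\gamma(t))\bigr)$. Then $\phi'(t) = (\vec\mu-\vec\eta)^\top J(\vec\gamma(t))\,(\vec\mu-\vec\eta) > 0$ by positive definiteness of $J$ along the segment, so $\phi$ is strictly increasing, giving $\phi(1)\neq\phi(0)$ and hence $\vec{\A}(\vec\mu)\neq\vec{\A}(\vec\eta)$. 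This is exactly the kind of "monotone operator" argument that makes global injectivity follow from a positivity condition on the Jacobian over a convex set.

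The main obstacle I anticipate is the boundary of $\mathcal{D}_G$: the weights $w_e$ diverge as $|\Delta_e|\to\pi/2$, so $J$ is not defined there and the clean positive-definiteness argument does not literally apply. I would deal with this by carrying out the segment argument on a slightly shrunken convex domain where all $|\Delta_e|\le\pi/2-\delta$, then letting $\delta\to0$ using continuity of $\vec{\A}$ on the closed cube-like domain $\mathcal{D}_G$; alternatively, one can split the segment at the finitely many parameter values where it crosses a face $\{\sum_\ell S_{e\ell}\eps_\ell = \pm1\}$ and argue on each open piece, noting that $\phi$ remains continuous and strictly increasing on the interiors. A secondary (minor) point worth stating carefully is that planarity is what guarantees each edge lies on at most two fundamental faces, but this is already built into the setup and is not actually needed for the injectivity argument — the argument only uses that $S$ has full column rank, so the planarity hypothesis in the theorem statement is inherited from the ambient construction rather than essential here.
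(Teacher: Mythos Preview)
Your argument is correct and shares its skeleton with the paper's: both assume two distinct preimages $\vec\eta,\vec\mu$, use convexity of $\mathcal{D}_G$ to join them by a segment, and obtain a contradiction from monotonicity of a scalar quantity along that segment. The execution differs. The paper writes the components of $\vec{\A}$ explicitly in the planar form $\A_k=m_k\arcsin(\eps_k)+\sum_{i}m_{ki}\arcsin(\eps_k-\eps_i)$, picks the index $k_0$ with maximal component of $\vec\xi=\vec\mu-\vec\eta$, and shows by a short case analysis that $\nabla\A_{k_0}\cdot\vec\xi>0$ all along the segment, contradicting $\A_{k_0}(\vec\eta)=\A_{k_0}(\vec\mu)$. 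You instead recognize the Jacobian as $J=S^\top W S$ with $W$ positive diagonal and $S$ of full column rank, so $J$ is positive definite and the monotone-operator inequality $\vec\xi^{\,\top}J\,\vec\xi>0$ yields injectivity directly, with no case distinction. This is the cleaner packaging: the paper's case analysis is in effect a by-hand verification of the same positive definiteness in the special planar coordinates, whereas your version makes the structure transparent and, as you note, does not actually use planarity in this step. You are also more careful than the paper about the boundary of $\mathcal{D}_G$, where some $w_e$ diverge; the paper silently writes the derivative there, while your continuity-plus-shrinking (or finite-exceptional-set) argument is the honest way to close that gap.
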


\begin{proof}
 For a planar network composed of $c$ fundamental cycles, Eq.~\eqref{eq:quantization} reads
 \begin{align}\label{eq:arcsin_gen}
  \A_k(\vec{\eps})\coloneqq m_k\arcsin(\eps_k)+\sum_{i\neq k}m_{ki}\arcsin(\eps_k-\eps_i)=2\pi q_k\, , \quad k=1,...,c\, .
 \end{align}
 For each $k$, Eq.~\eqref{eq:arcsin_gen} defines a level set, $\mathcal{Q}_k^{q_k}\subset\mathcal{D}_G$ of possible values for $\vec{\eps}$. 
 Assume that these level sets intersect in two distinct points of $\mathcal{D}_G$, $\vec{\eta}$ and $\vec{\mu}$. 
 Let $\vec{\xi}\coloneqq\vec{\mu}-\vec{\eta}$ be the difference between these two points. 
 Because $\vec{\eta}$ and $\vec{\mu}$ are assumed dinstict, $\vec{\xi}$ has at least one non-zero component. 
 Without loss of generality, we assume this component to be positive. 
 We then order the cycles such that $\xi_1=\max_\ell\{\xi_\ell\}>0$. 
 We now consider the directional derivative of $\A_k$ for $k=1,...,c$ in the direction of $\vec{\xi}$,
 \begin{align}\label{eq:dirderk}
  \nabla\A_k(\vec{\eps})\cdot\vec{\xi}&=\frac{m_k\xi_k}{\sqrt{1^\iPrime-\eps_k^2}}+\sum_{i\neq k}\frac{m_{ki}(\xi_k-\xi_i)}{\sqrt{1-(\eps_k^\iPrime-\eps_i^\iPrime)^2}}\, .
 \end{align}
 At each end of the segment defined by $\vec{\eta}+\alpha\vec{\xi}$, $\alpha\in[0,1]$, the function $\A_k$ takes the same value. 
 Thus the directional derivative either is constant and equal to zero or changes sign for some $\alpha\in(0,1)$. 
 Consider Eq.~\eqref{eq:dirderk} with $k=1$,
 \begin{align}\label{eq:dirder}
  \nabla\A_1(\vec{\eps})\cdot\vec{\xi}&=\frac{m_1\xi_1}{\sqrt{1^\iPrime-\eps_1^2}}+\sum_{i\neq 1}\frac{m_{1i}(\xi_1-\xi_i)}{\sqrt{1-(\eps_1^\iPrime-\eps_i^\iPrime)^2}}\, .
 \end{align}
 
 First if $m_1>0$, then the sum of terms on the right-hand-side is strictly positive. 
 Thus the directional derivative of $\A_1$ is never zero for $\alpha\in(0,1)$.
 
 Second, if $m_1=0$ and there exist a cycle $\ell$ such that $m_{1\ell}>0$ and $\xi_\ell<\xi_1$, then the right-hand-side of Eq.~\eqref{eq:dirder} is positive. 
 
 Third, if $m_1=0$ and for all $\ell$ such that $m_{1\ell}>0$, $\xi_1=\xi_\ell$, then either $\xi_k=\xi_1$ for all $k=1,...,c$, in which case there exist $k_0$ such that $m_{k_0}>0$, 
 or there exist $k_0$ and $k_1$ such that $\xi_1=\xi_{k_0}>\xi_{k_1}$ and $m_{k_0k_1}> 0$. 
 In both cases, considering $\A_{k_0}$ instead of $\A_1$ in Eq.~\eqref{eq:dirder}, we obtain that the right-hand-side is strictly positive. 
 
 Therefore, the directional derivative on the left-hand-side of Eq.~\eqref{eq:dirder} does not vanish along $\vec{\eps}=\vec{\eta}+\alpha\vec{\xi}$ for any $\alpha$. 
 This contradicts the assumption that $\vec{\eta}$ and $\vec{\mu}$ are distinct points of the intersection of level sets $\mathcal{Q}(\vec{q})$ and concludes the proof. 
\end{proof}

Theorem~\ref{thm:main_thm} implies that the set of stable fixed points of Eq.~\eqref{eq:kuramoto_red} with angle differences in $[-\pi/2,\pi/2]$ injects in the set of possible winding vectors. 
In other words, two distinct fixed points with angle differences in $[-\pi/2,\pi/2]$ have distinct winding vectors. 
The number of stable fixed points of Eq.~\eqref{eq:kuramoto_red} is then bounded by the number of possible winding vectors. 

\begin{cor}\label{cor:main}
 For any planar graph $G$, the number $\mathcal{N}^*$ of stable fixed points of Eq.~\eqref{eq:kuramoto_red} with angle differences in $[-\pi/2,\pi/2]$ is bounded from above as
 \begin{align}\label{eq:bound_restr}
  \mathcal{N}^*&\leq\prod_{k=1}^c\left[2\cdot{\rm Int}\left(n_k/4\right)+1\right]\, .
 \end{align}
\end{cor}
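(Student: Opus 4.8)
The plan is to combine Theorem~\ref{thm:main_thm} with Proposition~\ref{prop:1to1} and the single-cycle counting bound $\mathcal{N}_k = 2\cdot{\rm Int}[n_k/4]+1$ on the number of distinct level sets of each $\A_k$. The overall strategy is a straightforward injection-and-product-count argument, so the work is in assembling the pieces in the right order rather than in any new estimate.

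First I would recall that, by Proposition~\ref{prop:1to1}, every stable fixed point with angle differences in $[-\pi/2,\pi/2]$ and winding vector $\vec q$ corresponds bijectively to a point of $\mathcal{Q}(\vec q)$. Next, by Theorem~\ref{thm:main_thm}, for each fixed $\vec q \in \mathbb{Z}^c$ the set $\mathcal{Q}(\vec q) = \bigcap_{k=1}^c \mathcal{Q}_k^{q_k}$ is either empty or a single point. Hence the map sending a stable fixed point (with angle differences in $[-\pi/2,\pi/2]$) to its winding vector $\vec q$ is \emph{injective}: two such fixed points with the same $\vec q$ would both lie in the same one-point (or empty) set $\mathcal{Q}(\vec q)$, forcing them to coincide. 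Therefore $\mathcal{N}^*$ is bounded above by the number of winding vectors $\vec q$ for which $\mathcal{Q}(\vec q) \neq \emptyset$.

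Then I would bound that count componentwise. If $\mathcal{Q}(\vec q) \neq \emptyset$, then in particular $\mathcal{Q}_k^{q_k} \neq \emptyset$ for every $k$, i.e. the value $2\pi q_k$ is actually attained by $\A_k$ on $\mathcal{D}_G$. Since $\A_k$ takes at most $\mathcal{N}_k = 2\cdot{\rm Int}[n_k/4]+1$ distinct values of the form $2\pi q$ on $\mathcal{D}_G$ (the single-cycle estimate quoted from Ref.~\onlinecite{Del16}, applied here because $\A_k$ restricted to the relevant variables is a sum of $n_k$ arcsines each bounded in magnitude by $\pi/2$, so $|\A_k| < n_k \pi/2$ and the admissible integers $q_k$ number at most $2\cdot{\rm Int}[n_k/4]+1$), there are at most $\mathcal{N}_k$ admissible values for each $q_k$. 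The set of admissible winding vectors is thus contained in a product set of size $\prod_{k=1}^c \mathcal{N}_k$, giving
\begin{align*}
 \mathcal{N}^* &\leq \prod_{k=1}^c \mathcal{N}_k = \prod_{k=1}^c \left[2\cdot{\rm Int}(n_k/4)+1\right]\, ,
\end{align*}
which is the claim.

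The only subtle point — and the step I would treat most carefully — is the justification that $\A_k$ attains at most $\mathcal{N}_k$ values in $2\pi\mathbb{Z}$ on the full domain $\mathcal{D}_G$, since $\A_k$ now depends on the neighboring vortex parameters $\eps_i$ as well as on $\eps_k$. One should check that the bound $|\A_k(\vec\eps)| < n_k\pi/2$ still holds on all of $\mathcal{D}_G$: each of the $n_k$ summands in $\A_k$ is an arcsine of an argument in $[-1,1]$, hence lies in $[-\pi/2,\pi/2]$, and the argument equals $\pm 1$ only on a measure-zero subset, so the strict inequality persists; consequently $|q_k| \le {\rm Int}[n_k/4]$ when $n_k$ is not a multiple of $4$ and the slightly sharper count $|q_k| \le n_k/4$ when it is, in either case yielding at most $2\cdot{\rm Int}[n_k/4]+1$ integer values. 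This is exactly the counting already established for a single cycle in Ref.~\onlinecite{Del16}, and no coupling between cycles enters the cardinality bound — the coupling only affects \emph{which} combinations $\vec q$ are simultaneously realizable, which can only decrease the count. Everything else is bookkeeping.
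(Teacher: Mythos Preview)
Your proposal is correct and follows essentially the same route as the paper's proof: invoke Proposition~\ref{prop:1to1} and Theorem~\ref{thm:main_thm} to inject stable fixed points into winding vectors, then bound the latter componentwise via $|\A_k|\le n_k\pi/2$ to get at most $2\cdot{\rm Int}[n_k/4]+1$ values of $q_k$. The only quibble is your ``strict inequality persists'' remark: a measure-zero argument does not yield a pointwise strict bound, and in fact $|\A_k|=n_k\pi/2$ can occur on $\partial\mathcal{D}_G$ --- but this is harmless, since the non-strict bound $|q_k|\le n_k/4$ already gives the same integer count $2\cdot{\rm Int}[n_k/4]+1$, so you can simply drop that aside.
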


\begin{proof}
 For all $k=1,...,c$, the function $\A_k$ takes values in $[-n_k\pi/2,n_k\pi/2]$ and thus $q_k\in\{-{\rm Int}[n_k/4],...,{\rm Int}[n_k/4]\}$. 
 There are then at most $\prod_{k=1}^c\left[2\cdot{\rm Int}\left(n_k/4\right)+1\right]$ possible winding vectors $\vec{q}$. 
 According to Theorem~\ref{thm:main_thm}, for each of these winding vectors, the corresponding level sets $\mathcal{Q}_k^{q_k}$ intersect at most at a single point. 
 The one-to-one correspondence between fixed points of Eq.~\eqref{eq:kuramoto_red} and intersections of these level sets expressed in Proposition~\ref{prop:1to1} concludes the proof. 
\end{proof}

\begin{remarks}
 \begin{enumerate}[(i)]
  \item Note that in Corollary~\ref{cor:main} we restrict ourselves to stable fixed points with all angle differences in $[-\pi/2,\pi/2]$ whose number $\mathcal{N}^*$ is, 
  in general, lower than the total number of stable fixed points $\mathcal{N}$. 
  \item The bound on $\mathcal{N}^*$ is algebraic in the length of the cycles. 
  This is a significant improvement on the previously known bounds which are exponential in the number of vertices. 
  Nevertheless, the bound of Corollary~\ref{cor:main} is not yet tight, because some choices of $\vec{q}=(q_1,...,q_c)$ are not realizable. 
 \end{enumerate}
\end{remarks}

We next identify planar networks for which stable fixed points of Eq.~\eqref{eq:kuramoto_red} necessarily have all angle differences in $[-\pi/2,\pi/2]$, meaning that $\mathcal{N}^*=\mathcal{N}$. 
To that end, we recall a lemma by Taylor~\cite{Tay12} (a result very similar to this lemma can be found in Ref.~\onlinecite{Do12}).

\begin{lem}[Taylor~\cite{Tay12}]\label{lem:taylor}
 Let $\{\theta_i^{(0)}\}$ be a stable fixed point of Eq.~\eqref{eq:kuramoto_red} on a given graph $G$. 
 Then for any non-trivial partition $\mathcal{V}_G=U\cup U^c$ of the vertex set $\mathcal{V}_G$ of $G$,
 \begin{align*}
  \sum_{\substack{\langle ij\rangle\colon \\ i\in U,j\in U^c}}\cos(\theta_i^{(0)}-\theta_j^{(0)})\geq0\, .
 \end{align*}
\end{lem}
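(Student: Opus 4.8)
The plan is to prove Taylor's Lemma by contradiction via a variational argument that exploits the zero eigenvector of the stability matrix $M$ under a symmetry-breaking perturbation. First I would fix a non-trivial partition $\mathcal{V}_G = U \cup U^c$ and suppose, for contradiction, that the cut-sum $\sum_{\langle ij\rangle\colon i\in U, j\in U^c}\cos(\theta_i^{(0)}-\theta_j^{(0)})<0$. The key idea is to consider the one-parameter family of perturbations $\theta_i(s) = \theta_i^{(0)} + s$ for $i\in U$ and $\theta_j(s)=\theta_j^{(0)}$ for $j\in U^c$; this is a rigid rotation of the block $U$ relative to its complement. Along this family the only angle differences that change are those on edges crossing the cut, where $\Delta_{ij}(s) = \Delta_{ij}^{(0)} + s$.

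Next I would introduce the potential (Lyapunov) function $E(\{\theta_i\}) = -K\sum_{\langle ij\rangle}\cos(\theta_i-\theta_j)$, whose critical points are exactly the fixed points of Eq.~\eqref{eq:kuramoto_red} and whose Hessian at $\{\theta_i^{(0)}\}$ is $-M$. Restricting $E$ to the curve $s\mapsto\{\theta_i(s)\}$ gives a scalar function $f(s) = -K\sum_{\text{cut edges}}\cos(\Delta_{ij}^{(0)}+s) + \text{const}$. A direct computation shows $f'(0) = K\sum_{\text{cut}}\sin(\Delta_{ij}^{(0)})$, which vanishes because $\{\theta_i^{(0)}\}$ is a fixed point (summing Eq.~\eqref{eq:kirchhoff} over $i\in U$ telescopes internal edges and leaves exactly the cut contribution). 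Then $f''(0) = K\sum_{\text{cut}}\cos(\Delta_{ij}^{(0)})$, which by our contradiction hypothesis is strictly negative. Hence $s=0$ is a strict local maximum of $f$ along this direction.

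The main step is then to translate "$f$ has a strict local max at $s=0$" into "$M$ is not negative semidefinite," contradicting stability. The tangent vector of the curve at $s=0$ is the indicator vector $v$ with $v_i = 1$ for $i\in U$ and $v_i=0$ otherwise; one checks $v^\top M v = -f''(0) > 0$ (the second derivative of $E$ along a straight line in $\theta$-space equals the quadratic form of the Hessian since $E$ restricted to that line has vanishing higher-order dependence only through the cosines, but more carefully: $\frac{d^2}{ds^2}E(\theta^{(0)}+sv)\big|_{0} = v^\top (\mathrm{Hess}\,E)\,v = -v^\top M v$, and this equals $f''(0)$). Since $v^\top M v>0$, the matrix $M$ has a positive eigenvalue, so $\lambda_2>0$ and the fixed point is unstable — contradiction. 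The one subtlety to handle carefully is that $v$ is not orthogonal to the trivial zero mode $(1,\dots,1)^\top$ in general, but since $U$ is a proper nonempty subset we can replace $v$ by $\tilde v = v - \frac{|U|}{n}(1,\dots,1)^\top$, which is nonzero, orthogonal to the zero mode, and satisfies $\tilde v^\top M \tilde v = v^\top M v > 0$ because $M(1,\dots,1)^\top = 0$; this genuinely witnesses a positive Lyapunov exponent.

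I expect the main obstacle to be bookkeeping rather than conceptual: making the telescoping argument for $f'(0)=0$ fully rigorous (summing the fixed-point equations over the block $U$ and verifying that internal edges cancel in pairs while cut edges survive with a consistent sign), and being scrupulous about orientations and the sign conventions relating $M$, the Hessian of $E$, and the quadratic form $v^\top M v$. Everything else — the definition of the perturbation curve, the second-derivative computation, and the projection onto the complement of the constant mode — is short once the signs are pinned down.
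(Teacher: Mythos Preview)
Your argument is correct, and it is essentially the standard proof of Taylor's lemma. Note that the paper itself does not give a proof of this statement; it simply cites Taylor~\cite{Tay12} and uses the result as a black box, so there is no ``paper's own proof'' to compare against.

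Two minor streamlining remarks. First, the computation of $f'(0)$ and the telescoping of Eq.~\eqref{eq:kirchhoff} over $U$ are not needed: the quadratic form $v^\top M v$ is defined for any vector $v$, not only for directions tangent to a curve of fixed points, and one can compute directly that for the indicator vector $v=\mathbf{1}_U$,
\[
 v^\top M v \;=\; -K\sum_{\substack{\langle ij\rangle\colon\\ i\in U,\ j\in U^c}}\cos(\theta_i^{(0)}-\theta_j^{(0)})\,,
\]
so a negative cut-sum immediately gives $v^\top M v>0$. Second, the projection onto the orthogonal complement of the constant mode is also unnecessary: negative semidefiniteness of $M$ means $w^\top M w\le 0$ for \emph{every} vector $w$, so $v^\top M v>0$ already forces a strictly positive eigenvalue of $M$ regardless of whether $v$ overlaps with $(1,\dots,1)^\top$. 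Your projected vector $\tilde v$ works too, of course, since $M(1,\dots,1)^\top=0$ implies $\tilde v^\top M\tilde v=v^\top M v$; it is just an extra step.
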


The following lemma characterizes network topologies which accept only stable fixed points with angle differences in $[-\pi/2,\pi/2]$ 
and for which the bound in Corollary~\ref{cor:main} applies for the total number $\mathcal{N}$ of stable fixed points. 
In networks where adjacent cycles share at least two consecutive edges, as in the left panel of Fig.~\ref{fig:twoEdges}, all stable fixed points have all angle differences in the interval 
$[-\pi/2,\pi/2]$. 
If however, two cycles share a single edge only, as in the right panel of Fig.~\ref{fig:oneEdge}, some stable fixed points may have angle differences larger than $\pi/2$.

\begin{lem}\label{lem:twoCommonEdges}
 On a graph $G$ where no pair of vertices with degree larger or equal to 3 are connected by a single edge, 
 all angle differences of any stable fixed point of Eq.~\eqref{eq:kuramoto_red} are in $[-\pi/2,\pi/2]$.
\end{lem}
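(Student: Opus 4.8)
The plan is to use Taylor's lemma (Lemma~\ref{lem:taylor}) with a carefully chosen bipartition that isolates a single edge. Suppose, for contradiction, that a stable fixed point $\{\theta_i^{(0)}\}$ has an angle difference exceeding $\pi/2$ in absolute value on some edge $e=\langle ij\rangle$, so that $\cos(\theta_i^{(0)}-\theta_j^{(0)})<0$. The key structural input is the hypothesis: since $e$ is an edge joining $i$ and $j$, and we want this edge to be ``cuttable'' in a way that Taylor's inequality is violated, I first reduce to the case where $e$ lies on a cycle. An edge not on any cycle is a bridge (tree-like part), which the paper has already excluded from consideration; so $e$ belongs to at least one fundamental cycle. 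Because $G$ is planar and we use the face cycle basis, $e$ is shared by at most two faces, hence by at most two fundamental cycles.

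Next I would split into cases on how many cycles contain $e$. If $e$ lies on exactly one cycle $\mathcal{L}_k$, then $e$ has a well-defined ``other side'': removing $e$ from $G$ leaves the two endpoints of $e$ still connected (through the rest of $\mathcal{L}_k$), but I claim one can still find a partition $U\cup U^c$ whose edge boundary consists of $e$ together only with edges whose cosines can be controlled --- this is where the degree hypothesis enters. The hypothesis says $e$ does not join two vertices of degree $\ge 3$; so at least one endpoint, say $i$, has degree $\le 2$. If $\deg(i)=1$ then $e$ is a bridge, already excluded; so $\deg(i)=2$. Then $i$ has exactly two incident edges, $e=\langle ij\rangle$ and some $e'=\langle ik\rangle$. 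Take $U=\{i\}$: the edge boundary is exactly $\{e,e'\}$, and Taylor's lemma forces $\cos(\theta_i^{(0)}-\theta_j^{(0)})+\cos(\theta_i^{(0)}-\theta_k^{(0)})\ge 0$. This does not yet give a contradiction, so one must push further: the real argument must be that if \emph{two consecutive} edges meeting at a degree-2 vertex are forced to have cosines summing to a nonnegative number, and moreover the angle differences around each face sum to a multiple of $2\pi$, then a sign analysis along the face rules out $|\Delta_e|>\pi/2$.

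The cleanest route, which I would actually pursue, is this: along any face cycle $\mathcal{L}_k$, walk around it and apply Taylor's lemma repeatedly to the singleton $U=\{v\}$ for each degree-2 vertex $v$ on the cycle, and to two-element or larger sets $U$ consisting of a maximal run of consecutive degree-2 vertices between two higher-degree ``branch'' vertices. The degree hypothesis guarantees that between any two branch vertices on the cycle there are \emph{at least} the internal degree-2 vertices that make up a shared run of $\ge 2$ edges (when the cycle is adjacent to another), or else the cycle's edges are all ``private.'' In either case, for each such maximal run I get an inequality bounding the sum of cosines on the boundary of that run by something nonnegative; chaining these around the whole cycle, and using that $\sum_{e\in\mathcal{L}_k}\Delta_e=2\pi q_k$ pins the angle differences, I can show that no single $\Delta_e$ on the cycle can reach beyond $\pi/2$ without forcing one of the run-inequalities to fail. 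Since every edge lies on such a face cycle, this covers all edges.

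The main obstacle I anticipate is the case analysis at the ``branch'' vertices of degree $\ge 3$: Taylor's lemma applied to a set $U$ whose boundary passes through such a vertex picks up extra cosine terms from edges leaving the cycle, and one must argue these do not conspire to rescue a large angle difference. The degree hypothesis is precisely what ensures such a branch vertex is never adjacent along a single edge to \emph{another} branch vertex, so the ``bad'' edge $e$ with $|\Delta_e|>\pi/2$ always has a degree-2 neighbor on at least one side, giving a short boundary set $U$ to which Taylor's inequality applies with only two terms --- and two terms, one of which is $\cos(\Delta_e)<0$, force the partner angle difference to also be close to $\pm\pi/2$ with the right sign, which then propagates the constraint down the chain of degree-2 vertices until it hits a face-quantization contradiction. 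Making this propagation rigorous, and handling the subcase where $e$ sits on two faces simultaneously (so both its endpoints could a priori be branch vertices, except the hypothesis forbids exactly that configuration), is the delicate part; everything else is bookkeeping with the incidence structure already set up in Section~\ref{sec:model}.
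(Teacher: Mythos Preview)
Your reduction is correct up to the point where you isolate a degree-$2$ endpoint $i$ of the offending edge and apply Lemma~\ref{lem:taylor} with $U=\{i\}$ to get $\cos(\Delta_{ij})+\cos(\Delta_{ik})\ge 0$. You are also right that this inequality alone is not a contradiction. But the route you then propose---chaining Taylor inequalities along runs of degree-$2$ vertices and invoking the winding-number constraint $\sum_{e\in\mathcal{L}_k}\Delta_e=2\pi q_k$---does not close. Taylor's lemma only produces inequalities, so propagation is too weak; and even if you upgrade to equalities via the fixed-point equation along a chain of degree-$2$ vertices, you only learn that all edges on the chain share the same value of $\sin(\Delta_e)$, which is perfectly compatible with one edge having $|\Delta_e|>\pi/2$ and the others $|\Delta_e|<\pi/2$. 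The face-quantization condition then imposes no obstruction: for a chain of length $n$ one gets $(n-1)\Delta+(\pi-\Delta)=2\pi q$, which has solutions. So the winding-number route cannot yield the contradiction.

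The missing idea is to go back to the stability matrix itself rather than to Taylor's lemma. At the degree-$2$ vertex $i$, the fixed-point equation~\eqref{eq:kirchhoff} gives the \emph{exact} relation $\sin(\Delta_{ik})=\sin(\Delta_{ji})$, hence $\cos(\Delta_{ik})=-\cos(\Delta_{ji})\eqqcolon -c$ when $|\Delta_{ik}|>\pi/2>|\Delta_{ji}|$. This forces the diagonal entry $(-M)_{ii}=\cos(\Delta_{ij})+\cos(\Delta_{ik})=0$. The $2\times 2$ principal minor of $-M$ on indices $\{j,i\}$ is then
\[
\det\begin{pmatrix} x & -c \\ -c & 0 \end{pmatrix}=-c^{2}<0,
\]
and Sylvester's criterion immediately gives that $-M$ is not positive semidefinite, contradicting stability. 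Taylor's lemma is used only for the preliminary observation that the two cosines at $i$ cannot both be negative; the actual contradiction comes from this $2\times 2$ minor, which your proposal never considers.
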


\begin{proof}
 First, according to Lemma~\ref{lem:taylor}, the two edges connected to a vertex of degree 2 cannot both carry an angle difference whose cosines are negative. 
 Thus in the left panel of Fig.~\ref{fig:twoEdges}, $\Delta_{ji}$ and $\Delta_{ik}$ cannot be both larger than $\pi/2$. 
 Without loss of generality, we assume $\Delta_{ji}$ and $\Delta_{ik}$ to be both positive, otherwise we consider $\Delta_{ij}$ and $\Delta_{ki}$ instead. 
 \begin{figure}
  \begin{center}
   \includegraphics[width=.8\textwidth]{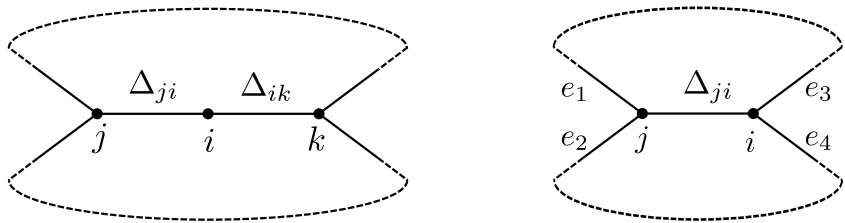}
   \caption{\it Neighboring cycles sharing two consecutive
   edges (left panel) and a single edge (right panel). In the left panel, 
   the angle differences $\Delta_{ji}$ and $\Delta_{ik}$ must both be in the interval $[-\pi/2,\pi/2]$
   for a fixed point solution to be stable. In the right panel, however, 
   the angle difference $\Delta_{ji}$ may be larger than $\pi/2$.}
   \label{fig:twoEdges}
   \label{fig:oneEdge}
  \end{center}
 \end{figure}
 Now if $\Delta_{ji}<\pi/2<\Delta_{ik}$, according to Eq.~\eqref{eq:kirchhoff}, 
 \begin{align*}
  \sin(-\Delta_{ji})+\sin(\Delta_{ik})=0& \iff \Delta_{ik}=\pi-\Delta_{ji}\\
  &\implies \cos(\Delta_{ik})=-\cos(\Delta_{ji})\eqqcolon -c\, .
 \end{align*}
 Consider the stability matrix $M$ defined in Eq.~\eqref{eq:stab_matrix}. The principal minor of $-M$ with row and column indices $\{j,i\}$ is 
 \begin{align}\label{eq:minor}
  \det
  \begin{pmatrix}
   x & -\cos(\Delta_{ji})\\
   -\cos(\Delta_{ji}) & \cos(\Delta_{ji})+\cos(\Delta_{ik})
  \end{pmatrix}
  &=\det
  \begin{pmatrix}
   x & -c \\
   -c & 0 
  \end{pmatrix}\, ,
 \end{align}
 which is negative. According to Sylvester's criterion~\cite{Hor86}, $M$ is not negative semidefinite, implying that the fixed point is unstable. 
 We conclude that a stable fixed point of Eq.~\eqref{eq:kuramoto_red} has all angle differences in $[-\pi/2,\pi/2]$. 
\end{proof}

Lemma~\ref{lem:twoCommonEdges} allows to apply Corollary~\ref{cor:main} to a well-defined class of networks. 
The next corollary gives a bound on the number of stable fixed points of Eq.~\eqref{eq:kuramoto_red} depending on the topology of the network. 

\begin{cor}\label{cor:topobound}
 On a  planar graph $G$ where no pair of vertices with degree larger or equal to 3 are connected by a single edge, 
 the number $\mathcal{N}$ of stable fixed points of Eq.~\eqref{eq:kuramoto_red} is bounded from
 above by 
 \begin{align}\label{eq:topobound}
  \mathcal{N}&\leq\prod_{k=1}^c\left[2\cdot{\rm Int}\left(n_k/4\right)+1\right]\, .
 \end{align}
\end{cor}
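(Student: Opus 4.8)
The plan is simply to chain together Lemma~\ref{lem:twoCommonEdges} and Corollary~\ref{cor:main}: the former removes exactly the restriction that appears in the latter, so that the algebraic bound applies to the \emph{full} count $\mathcal{N}$ and not merely to $\mathcal{N}^*$. Concretely, I would argue as follows. Under the topological hypothesis on $G$, Lemma~\ref{lem:twoCommonEdges} guarantees that \emph{every} stable fixed point of Eq.~\eqref{eq:kuramoto_red} has all of its angle differences in $[-\pi/2,\pi/2]$; equivalently, the set counted by $\mathcal{N}$ coincides with the set counted by $\mathcal{N}^*$, hence $\mathcal{N}=\mathcal{N}^*$. Since $G$ is planar, Corollary~\ref{cor:main} then gives $\mathcal{N}^*\le\prod_{k=1}^c[2\cdot{\rm Int}(n_k/4)+1]$, with the $n_k$ the lengths of the faces of the planar embedding, and substituting $\mathcal{N}=\mathcal{N}^*$ yields Eq.~\eqref{eq:topobound}.

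The only point deserving a word of care — a bookkeeping check rather than a genuine obstacle — is to confirm that the hypothesis ``no pair of vertices of degree $\ge 3$ joined by a single edge'' is precisely what triggers Lemma~\ref{lem:twoCommonEdges} on the reduced graph. After the tree-like parts have been absorbed, as agreed in Section~\ref{sec:model}, every vertex of $G$ has degree at least $2$, so the hypothesis forces each edge to have at least one endpoint of degree exactly $2$; Lemma~\ref{lem:twoCommonEdges} then applies to that edge through its degree-$2$ endpoint, and ranging over all edges shows that no angle difference can exceed $\pi/2$ in a stable fixed point. One should also note that the fundamental cycle basis entering Corollary~\ref{cor:main} is the face basis of the embedding, so the $n_k$ in the two statements refer to the same quantities and the two bounds are literally the same expression.

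In short, I expect no hard step at all: the substance is carried entirely by Lemma~\ref{lem:twoCommonEdges} (which itself rests on Taylor's Lemma~\ref{lem:taylor} and Sylvester's criterion) and by Corollary~\ref{cor:main} (which rests on Theorem~\ref{thm:main_thm} and Proposition~\ref{prop:1to1}); the present corollary is just the assembly of these two facts, together with the observation that the stated graph hypothesis is exactly the regime in which the assembly is legitimate.
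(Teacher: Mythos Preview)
Your proposal is correct and matches the paper's own reasoning: the corollary is stated immediately after Lemma~\ref{lem:twoCommonEdges} with no separate proof, the surrounding text making clear that it is obtained by combining that lemma (which forces $\mathcal{N}=\mathcal{N}^*$ under the stated hypothesis) with Corollary~\ref{cor:main}. Your extra bookkeeping remark about degree-$2$ endpoints after absorbing tree-like parts is a fair elaboration of why the hypothesis feeds into Lemma~\ref{lem:twoCommonEdges}, but the core argument is identical to the paper's.
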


Corollary~\ref{cor:topobound} covers a large class of networks. 
The bound~\eqref{eq:topobound} is valid for any planar network with any number of cycles, as long as two cycles share either two or more edges, or no edge. 
When two cycles share a single edge, as in the right panel of Fig.~\ref{fig:oneEdge}, we cannot identify a negative principal minor of $-M$ as in Eq.~\eqref{eq:minor}, 
which invalidates the proof for such network topologies. 
In Sec.~\ref{sec:anglediff}, we discuss some examples where the stability matrix $M$ is negative semi-definite even though some angle differences are larger than $\pi/2$, 
meaning that all principal minors of $-M$ are non-negative. 
What happens in these situations, is that angle differences on edges $e_1$, $e_2$, $e_3$ and $e_4$ (see right panel of Fig.~\ref{fig:oneEdge}) may be small enough 
to stabilize an angle difference larger that $\pi/2$ on edge $\langle ij\rangle$.

\section{Angle differences exceeding \texorpdfstring{$\pi/2$}{}}\label{sec:anglediff}
The results of Sec.~\ref{sec:bound} do not give a bound on the number of stable fixed points of Eq.~\eqref{eq:kuramoto_red} for any planar network. 
There exist some examples of networks admitting a stable fixed point with some angle differences larger than $\pi/2$. 
We discuss two such examples here.
These networks have at least one pair of cycles sharing only one edge and consequently they do not satisfy the hypothesis of Lemma~\ref{lem:twoCommonEdges}. 

\paragraph*{Example 1}
Consider the network of Fig.~\ref{fig:2cyclesCounterex} with $K=1$, where the left path from $A$ to $B$ is of length $m_1=15$, 
the right path is of length $m_2=6$ and the center path is of lenght $m_{12}=1$. 
\begin{figure}
 \begin{center}
  \includegraphics[width=7cm]{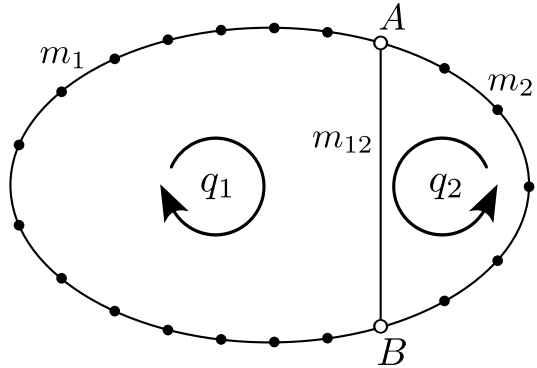}
  \caption{\it Example of a network having 
  a stable fixed point solution of Eq.~(\ref{eq:kuramoto_red}) with one angle 
  difference larger than $\pi/2$. 
  The number of edges along the three paths from $A$ to $B$ 
  are $m_1=15$, $m_2=6$ and $m_{12}=1$, the coupling constant is $K=1$. 
  The angle difference in the middle path is $\Delta_{AB}\approx\pi/2+0.149$ for a stable fixed
  point solution with $q_1=-1$ and $q_2=1$. }
  \label{fig:2cyclesCounterex}
 \end{center}
\end{figure}
We checked numerically that there exists a stable fixed point of Eq.~\eqref{eq:kuramoto_red} with one angle difference larger than $\pi/2$.
This fixed point has winding numbers $q_1=-1$ and $q_2=1$, a large angle difference on the central edge $\langle AB\rangle$, $\Delta_{AB}\approx\pi/2+0.149$ and a negative Lyapunov exponent,
$\lambda_2\approx-0.0194$. 

\paragraph*{Example 2}
Consider now the triangular lattice shown in the left panel of Fig.~\ref{fig:triangleCounterEx} with $K=1$. 
The fixed point with winding number $q=1$ on the central triangle has three angle differences of $2\pi/3$, 
but we checked numerically that it is stable nevertheless, $\lambda_2\approx-0.0974<0$. 
According to Refs.~\onlinecite{Del16,Tav72}, the fixed point with a vortex flow on the 3-vertex network (top right panel of Fig.~\ref{fig:triangleCounterEx}) is unstable. 
The largest eigenvalue of its stablility matrix can be analytically computed, $\lambda_2=1.5>0$. 
In the triangular lattice, the structure of the network surrounding the central triangle stabilizes these large angle differences, thanks to an increased connectivity. 
It should be noted however that increasing the connectivity with only tree-like graph extensions is not enough.  
The \emph{hairy triangle} shown on the bottom right panel of Fig.~\ref{fig:triangleCounterEx} has a dynamics that is the same as the 3-vertex network of the top right panel, 
$\lambda_2\approx0.386>0$. 
It therefore cannot carry a vortex flow. 
\begin{figure}
\begin{center}
 \includegraphics[width=260px]{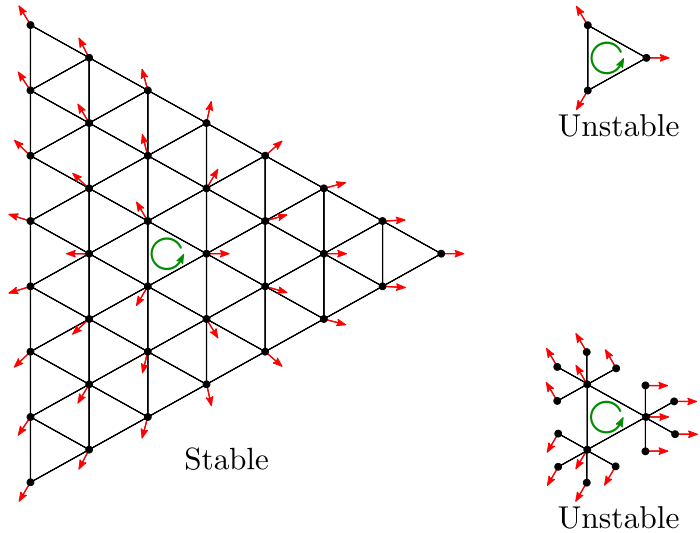}
 \caption{\it Left: Stable fixed point on a triangular lattice with three angle differences of $2\pi/3$, here $\lambda_2\approx-0.0974<0$. 
 The red arrows indicate the angle at each vertex and the green arrow indicates a vortex flow $q=1$ on the central triangle. The lattice shown is the smallest one we found numerically that stabilizes 
 this vortex flow.
 Top right: Fixed point on the triangular network with winding number $q=1$. 
 This fixed point is unstable, $\lambda_2=1.5>0$, which agrees with the results of Refs.~\onlinecite{Del16,Tav72}. 
 Bottom right: Fixed point on the \emph{hairy triangle} network with a vortex flow $q=1$. 
 Its dynamics is the same as the simple triangle, $\lambda_2\approx0.386$, even if its vertices have higher degree. 
 For all these three examples, $K=1$.}
 \label{fig:triangleCounterEx}
\end{center}
\end{figure}

\section{Conclusion}\label{sec:conclusion}

Our investigations allowed to give an upper bound, Eq.~\eqref{eq:topobound}, for the number 
of stable fixed point solutions with angle differences in $[-\pi/2,\pi/2]$, for the Kuramoto model with equal frequencies on planar networks. 
This bound is algebraic in the number of nodes on each cycle of the network, which is a
very significant improvement on earlier bounds 
which were exponential in the number of nodes, for all fixed point solutions 
(stable and unstable).~\cite{Meh15,Che16}

The structure of Eq.~\eqref{eq:topobound} identifies two opposite tendencies determining $\mathcal{N}$. 
On one hand more cycles means more terms in the product, which would suggest a larger number of stable fixed points. 
Furthermore, having more cycles increases the possiblity to have two cycles sharing a unique edge, which allows larger angle differences and possibly more stable fixed points. 
On the other hand, more cycles for a fixed number of vertices implies shorter cycles, and according to Eq.~\eqref{eq:topobound} again, this suggests less stable fixed points. 
The number of stable fixed points is then a trade-off between these two tendencies. 
This trade-off can be illustrated by considering the average number $\langle\mathcal{N}\rangle$ of stable fixed points of Eq.~\eqref{eq:kuramoto_red}, taken over all the networks with $n$ vertices 
and $c$ cycles. 
For $c=0$, all graph realizations are trees, which have a unique stable fixed point solution. 
For $c=1$, we already know that the average number of stable fixed points will increase and be bounded by $2\cdot{\rm Int}\left[n/4\right]+1$.~\cite{Del16} 
But, as $c$ increases further, 
the average number of stable fixed points will start to decrease at some point, 
to finally reach a unique stable fixed point for the all-to-all Kuramoto model.~\cite{Aey04,Mir05} 
We can then expect $\langle\mathcal{N}\rangle$ to increase for $c<c^*$ and decrease for $c>c^*$ with some critical value $c^*\in\left\{0,...,(n-1)(n-2)/2\right\}$. 
A non-monotonous behavior of the average $\langle\mathcal{N}\rangle$ with respect to $c$ has  
been reported numerically by Mehta et al.\cite{Meh15}

Equal frequencies correspond to the $K \rightarrow \infty$ limit of the Kuramoto model, Eq.~\eqref{eq:kuramoto_general}, and it is an open question whether our upper bound, 
Eq.~\eqref{eq:topobound}, remains
valid upon reducing $K$, when the spectrum of natural frequencies $P_i$ influences the 
fixed points. In Ref.~\onlinecite{Del16}, we managed to show that any stable fixed point
solution at a given value of $K$ remains stable upon increasing $K$ for a single-cycle 
network, meaning conversely that new stable fixed point solutions do not appear as $K$ is reduced. 
While we expect the same to hold for any network, we have been unable so far
to rigorously extend the argument of Ref.~\onlinecite{Del16} beyond single-cycle networks. 
We expect Eq.~\eqref{eq:topobound} to be an upper bound for any $K$, but have not been able to rigorously prove it. 

We showed the bound \eqref{eq:topobound} to be valid for planar networks. 
Our approach is valid for any network topology until Theorem~\ref{thm:main_thm}, which only applies to planar networks. 
All later results cannot be applied directly to general network topologies. 
Furthermore, our derivation of Eq.~\eqref{eq:topobound}, is valid under
the assumption that all stable fixed point solutions have angle differences 
$|\Delta_e | \le \pi/2$ on all edges $e$ of the network. In Section~\ref{sec:anglediff}
we identified topological ingredients that allow stable fixed point solutions
to violate that condition. We found that networks with
either no or strictly more than one edge common to any two cycles carry stable fixed point solutions
with only $|\Delta_e | \le \pi/2$, therefore Eq.~\eqref{eq:topobound} holds for the number of stable fixed points on such graphs. 
For graphs with edges single-shared by two cycles, the winding number on one such cycle is bounded from above by 
\begin{align*}
 |2\pi q_k|&\leq(n_k-n'_k)\pi/2+n'_k\pi
\end{align*}
with $n'_k$ counting the number of edges on the $k^{\rm th}$ cycle single-shared with another cycle 
(as in the right panel of Fig.~\ref{fig:oneEdge}) 
and for which an angle difference larger than $\pi/2$ does not necessarily lead to instability. 
We therefore conjecture the following upper bound for the number of stable fixed points on generic planar networks, 
\begin{align}\label{eq:conjecture}
 \mathcal{N}&\leq\prod_{k=1}^c\left[2\cdot{\rm Int}\left((n_k+n'_k)/4\right)+1\right]\, .
\end{align}
In trying to rigorously confirm this bound we could not prove
the one-to-one correspondence between fixed point solutions and vortex flow vectors when $|\Delta_e | > \pi/2$ for some edges.
Eq.~\eqref{eq:conjecture} is therefore a conjecture which we could not disprove numerically, but have yet to prove rigorously.

We finally point out that the approach based on vortex flows we constructed is equivalent,
in quantum-mechanical language, to rewriting the problem in a basis with good 
angular momentum quantum number $\{\ell_k\}$ for each fundamental cycle. Given the knowledge of 
all admissible values for each $\ell_k$ it is then possible to construct initial states with higher chances
to converge towards fixed point solutions with finite winding vector $\vec{q}$. A numerical algorithm to determine 
all possible stable fixed point solutions in complex planar networks has been constructed and will be 
presented elsewhere.

After completion of this manuscript, we heard of Ref.~\onlinecite{Man16}, which reproduces several of our results.

\section{Acknowledgment}

We thank J. Dubuisson and L. Pagnier for interesting discussions. This work has been 
supported by the Swiss National Science
Foundation under an AP Energy Grant.

\bibliographystyle{apsrev4-1}
\bibliography{bibliography}

\begin{thebibliography}{35}%
\makeatletter
\providecommand \@ifxundefined [1]{%
 \@ifx{#1\undefined}
}%
\providecommand \@ifnum [1]{%
 \ifnum #1\expandafter \@firstoftwo
 \else \expandafter \@secondoftwo
 \fi
}%
\providecommand \@ifx [1]{%
 \ifx #1\expandafter \@firstoftwo
 \else \expandafter \@secondoftwo
 \fi
}%
\providecommand \natexlab [1]{#1}%
\providecommand \enquote  [1]{``#1''}%
\providecommand \bibnamefont  [1]{#1}%
\providecommand \bibfnamefont [1]{#1}%
\providecommand \citenamefont [1]{#1}%
\providecommand \href@noop [0]{\@secondoftwo}%
\providecommand \href [0]{\begingroup \@sanitize@url \@href}%
\providecommand \@href[1]{\@@startlink{#1}\@@href}%
\providecommand \@@href[1]{\endgroup#1\@@endlink}%
\providecommand \@sanitize@url [0]{\catcode `\\12\catcode `\$12\catcode
  `\&12\catcode `\#12\catcode `\^12\catcode `\_12\catcode `\%12\relax}%
\providecommand \@@startlink[1]{}%
\providecommand \@@endlink[0]{}%
\providecommand \url  [0]{\begingroup\@sanitize@url \@url }%
\providecommand \@url [1]{\endgroup\@href {#1}{\urlprefix }}%
\providecommand \urlprefix  [0]{URL }%
\providecommand \Eprint [0]{\href }%
\providecommand \doibase [0]{http://dx.doi.org/}%
\providecommand \selectlanguage [0]{\@gobble}%
\providecommand \bibinfo  [0]{\@secondoftwo}%
\providecommand \bibfield  [0]{\@secondoftwo}%
\providecommand \translation [1]{[#1]}%
\providecommand \BibitemOpen [0]{}%
\providecommand \bibitemStop [0]{}%
\providecommand \bibitemNoStop [0]{.\EOS\space}%
\providecommand \EOS [0]{\spacefactor3000\relax}%
\providecommand \BibitemShut  [1]{\csname bibitem#1\endcsname}%
\let\auto@bib@innerbib\@empty
\bibitem [{\citenamefont {Kuramoto}(1975)}]{Kur75abbrv}%
  \BibitemOpen
  \bibfield  {author} {\bibinfo {author} {\bibfnamefont {Y.}~\bibnamefont
  {Kuramoto}},\ }in\ \href {\doibase 10.1007/BFb0013365} {\emph {\bibinfo
  {booktitle} {Lecture Notes in Physics, vol. 39.}}},\ \bibinfo {editor}
  {edited by\ \bibinfo {editor} {\bibfnamefont {H.}~\bibnamefont {Araki}}}\
  (\bibinfo  {publisher} {Springer},\ \bibinfo {address} {Berlin, Heidelberg},\
  \bibinfo {year} {1975})\ pp.\ \bibinfo {pages} {420--422}\BibitemShut
  {NoStop}%
\bibitem [{\citenamefont {Kuramoto}(1984)}]{Kur84}%
  \BibitemOpen
  \bibfield  {author} {\bibinfo {author} {\bibfnamefont {Y.}~\bibnamefont
  {Kuramoto}},\ }\href@noop {} {\bibfield  {journal} {\bibinfo  {journal}
  {Progr. Theoret. Phys. Suppl.}\ }\textbf {\bibinfo {volume} {79}},\ \bibinfo
  {pages} {223} (\bibinfo {year} {1984})}\BibitemShut {NoStop}%
\bibitem [{\citenamefont {Strogatz}(2000)}]{Str00}%
  \BibitemOpen
  \bibfield  {author} {\bibinfo {author} {\bibfnamefont {S.~H.}\ \bibnamefont
  {Strogatz}},\ }\href@noop {} {\bibfield  {journal} {\bibinfo  {journal}
  {Physica D}\ }\textbf {\bibinfo {volume} {143}},\ \bibinfo {pages} {1}
  (\bibinfo {year} {2000})}\BibitemShut {NoStop}%
\bibitem [{\citenamefont {Acebr\'on}\ \emph {et~al.}(2005)\citenamefont
  {Acebr\'on}, \citenamefont {Bonilla}, \citenamefont {P\'erez~Vicente},
  \citenamefont {Ritort},\ and\ \citenamefont {Spigler}}]{Ace05}%
  \BibitemOpen
  \bibfield  {author} {\bibinfo {author} {\bibfnamefont {J.~A.}\ \bibnamefont
  {Acebr\'on}}, \bibinfo {author} {\bibfnamefont {L.~L.}\ \bibnamefont
  {Bonilla}}, \bibinfo {author} {\bibfnamefont {C.~J.}\ \bibnamefont
  {P\'erez~Vicente}}, \bibinfo {author} {\bibfnamefont {F.}~\bibnamefont
  {Ritort}}, \ and\ \bibinfo {author} {\bibfnamefont {R.}~\bibnamefont
  {Spigler}},\ }\href@noop {} {\bibfield  {journal} {\bibinfo  {journal} {Rev.
  Mod. Phys.}\ }\textbf {\bibinfo {volume} {77}},\ \bibinfo {pages} {137}
  (\bibinfo {year} {2005})}\BibitemShut {NoStop}%
\bibitem [{\citenamefont {D\"orfler}\ and\ \citenamefont
  {Bullo}(2014)}]{Dor14}%
  \BibitemOpen
  \bibfield  {author} {\bibinfo {author} {\bibfnamefont {F.}~\bibnamefont
  {D\"orfler}}\ and\ \bibinfo {author} {\bibfnamefont {F.}~\bibnamefont
  {Bullo}},\ }\href@noop {} {\bibfield  {journal} {\bibinfo  {journal}
  {Automatica}\ }\textbf {\bibinfo {volume} {50}},\ \bibinfo {pages} {1539}
  (\bibinfo {year} {2014})}\BibitemShut {NoStop}%
\bibitem [{\citenamefont {Ermentrout}(1985)}]{Erm85}%
  \BibitemOpen
  \bibfield  {author} {\bibinfo {author} {\bibfnamefont {G.~B.}\ \bibnamefont
  {Ermentrout}},\ }\href@noop {} {\bibfield  {journal} {\bibinfo  {journal} {J.
  Math. Biol.}\ }\textbf {\bibinfo {volume} {22}},\ \bibinfo {pages} {1}
  (\bibinfo {year} {1985})}\BibitemShut {NoStop}%
\bibitem [{\citenamefont {van Hemmen}\ and\ \citenamefont
  {Wreskinski}(1993)}]{Hem93}%
  \BibitemOpen
  \bibfield  {author} {\bibinfo {author} {\bibfnamefont {J.~L.}\ \bibnamefont
  {van Hemmen}}\ and\ \bibinfo {author} {\bibfnamefont {W.~F.}\ \bibnamefont
  {Wreskinski}},\ }\href@noop {} {\bibfield  {journal} {\bibinfo  {journal} {J.
  Stat. Phys.}\ }\textbf {\bibinfo {volume} {72}},\ \bibinfo {pages} {145}
  (\bibinfo {year} {1993})}\BibitemShut {NoStop}%
\bibitem [{\citenamefont {Aeyels}\ and\ \citenamefont {Rogge}(2004)}]{Aey04}%
  \BibitemOpen
  \bibfield  {author} {\bibinfo {author} {\bibfnamefont {D.}~\bibnamefont
  {Aeyels}}\ and\ \bibinfo {author} {\bibfnamefont {J.~A.}\ \bibnamefont
  {Rogge}},\ }\href@noop {} {\bibfield  {journal} {\bibinfo  {journal} {Prog.
  Th. Phys.}\ }\textbf {\bibinfo {volume} {112}},\ \bibinfo {pages} {921}
  (\bibinfo {year} {2004})}\BibitemShut {NoStop}%
\bibitem [{\citenamefont {Mirollo}\ and\ \citenamefont
  {Strogatz}(2005)}]{Mir05}%
  \BibitemOpen
  \bibfield  {author} {\bibinfo {author} {\bibfnamefont {R.~E.}\ \bibnamefont
  {Mirollo}}\ and\ \bibinfo {author} {\bibfnamefont {S.~H.}\ \bibnamefont
  {Strogatz}},\ }\href@noop {} {\bibfield  {journal} {\bibinfo  {journal}
  {Physica D}\ }\textbf {\bibinfo {volume} {205}},\ \bibinfo {pages} {249}
  (\bibinfo {year} {2005})}\BibitemShut {NoStop}%
\bibitem [{\citenamefont {Mehta}\ \emph {et~al.}(2015)\citenamefont {Mehta},
  \citenamefont {Daleo}, \citenamefont {D\"orfler},\ and\ \citenamefont
  {Hauenstein}}]{Meh15}%
  \BibitemOpen
  \bibfield  {author} {\bibinfo {author} {\bibfnamefont {D.}~\bibnamefont
  {Mehta}}, \bibinfo {author} {\bibfnamefont {N.~S.}\ \bibnamefont {Daleo}},
  \bibinfo {author} {\bibfnamefont {F.}~\bibnamefont {D\"orfler}}, \ and\
  \bibinfo {author} {\bibfnamefont {J.~D.}\ \bibnamefont {Hauenstein}},\ }\href
  {\doibase 10.1063/1.4919696} {\bibfield  {journal} {\bibinfo  {journal}
  {Chaos}\ }\textbf {\bibinfo {volume} {25}},\ \bibinfo {pages} {053103}
  (\bibinfo {year} {2015})}\BibitemShut {NoStop}%
\bibitem [{\citenamefont {Chen}\ \emph {et~al.}(2016)\citenamefont {Chen},
  \citenamefont {Mehta},\ and\ \citenamefont {Niemerg}}]{Che16}%
  \BibitemOpen
  \bibfield  {author} {\bibinfo {author} {\bibfnamefont {T.}~\bibnamefont
  {Chen}}, \bibinfo {author} {\bibfnamefont {D.}~\bibnamefont {Mehta}}, \ and\
  \bibinfo {author} {\bibfnamefont {M.}~\bibnamefont {Niemerg}},\ }\href
  {http://arxiv.org/abs/1603.05905} {\bibfield  {journal} {\bibinfo  {journal}
  {arXiv:1603.05905}\ } (\bibinfo {year} {2016})}\BibitemShut {NoStop}%
\bibitem [{\citenamefont {Rogge}\ and\ \citenamefont {Aeyels}(2004)}]{Rog04}%
  \BibitemOpen
  \bibfield  {author} {\bibinfo {author} {\bibfnamefont {J.~A.}\ \bibnamefont
  {Rogge}}\ and\ \bibinfo {author} {\bibfnamefont {D.}~\bibnamefont {Aeyels}},\
  }\href@noop {} {\bibfield  {journal} {\bibinfo  {journal} {J. Phys. A}\
  }\textbf {\bibinfo {volume} {37}},\ \bibinfo {pages} {11135} (\bibinfo {year}
  {2004})}\BibitemShut {NoStop}%
\bibitem [{\citenamefont {Wiley}\ \emph {et~al.}(2006)\citenamefont {Wiley},
  \citenamefont {Strogatz},\ and\ \citenamefont {Girvan}}]{Wil06}%
  \BibitemOpen
  \bibfield  {author} {\bibinfo {author} {\bibfnamefont {D.~A.}\ \bibnamefont
  {Wiley}}, \bibinfo {author} {\bibfnamefont {S.~H.}\ \bibnamefont {Strogatz}},
  \ and\ \bibinfo {author} {\bibfnamefont {M.}~\bibnamefont {Girvan}},\ }\href
  {\doibase 10.1063/1.2165594} {\bibfield  {journal} {\bibinfo  {journal}
  {Chaos}\ }\textbf {\bibinfo {volume} {16}},\ \bibinfo {pages} {015103}
  (\bibinfo {year} {2006})}\BibitemShut {NoStop}%
\bibitem [{\citenamefont {Menck}\ \emph {et~al.}(2014)\citenamefont {Menck},
  \citenamefont {Heitzig}, \citenamefont {Kurths},\ and\ \citenamefont
  {Schellnhuber}}]{Men14}%
  \BibitemOpen
  \bibfield  {author} {\bibinfo {author} {\bibfnamefont {P.~J.}\ \bibnamefont
  {Menck}}, \bibinfo {author} {\bibfnamefont {J.}~\bibnamefont {Heitzig}},
  \bibinfo {author} {\bibfnamefont {J.}~\bibnamefont {Kurths}}, \ and\ \bibinfo
  {author} {\bibfnamefont {H.~J.}\ \bibnamefont {Schellnhuber}},\ }\href@noop
  {} {\bibfield  {journal} {\bibinfo  {journal} {Nat. Comms.}\ }\textbf
  {\bibinfo {volume} {5}},\ \bibinfo {pages} {3969} (\bibinfo {year}
  {2014})}\BibitemShut {NoStop}%
\bibitem [{\citenamefont {D\"orfler}\ \emph {et~al.}(2013)\citenamefont
  {D\"orfler}, \citenamefont {Chertkov},\ and\ \citenamefont {Bullo}}]{Dor13}%
  \BibitemOpen
  \bibfield  {author} {\bibinfo {author} {\bibfnamefont {F.}~\bibnamefont
  {D\"orfler}}, \bibinfo {author} {\bibfnamefont {M.}~\bibnamefont {Chertkov}},
  \ and\ \bibinfo {author} {\bibfnamefont {F.}~\bibnamefont {Bullo}},\
  }\href@noop {} {\bibfield  {journal} {\bibinfo  {journal} {Proc. Natl. Acad.
  Sci.}\ }\textbf {\bibinfo {volume} {110}},\ \bibinfo {pages} {2005} (\bibinfo
  {year} {2013})}\BibitemShut {NoStop}%
\bibitem [{\citenamefont {Delabays}\ \emph {et~al.}(2016)\citenamefont
  {Delabays}, \citenamefont {Coletta},\ and\ \citenamefont {Jacquod}}]{Del16}%
  \BibitemOpen
  \bibfield  {author} {\bibinfo {author} {\bibfnamefont {R.}~\bibnamefont
  {Delabays}}, \bibinfo {author} {\bibfnamefont {T.}~\bibnamefont {Coletta}}, \
  and\ \bibinfo {author} {\bibfnamefont {P.}~\bibnamefont {Jacquod}},\
  }\href@noop {} {\bibfield  {journal} {\bibinfo  {journal} {J. Math. Phys.}\
  }\textbf {\bibinfo {volume} {57}},\ \bibinfo {pages} {032701} (\bibinfo
  {year} {2016})}\BibitemShut {NoStop}%
\bibitem [{\citenamefont {Korsak}(1972)}]{Kor72}%
  \BibitemOpen
  \bibfield  {author} {\bibinfo {author} {\bibfnamefont {A.~J.}\ \bibnamefont
  {Korsak}},\ }\href
  {http://ieeexplore.ieee.org/xpls/abs_all.jsp?arnumber=4074824} {\bibfield
  {journal} {\bibinfo  {journal} {IEEE Trans. Power App. Syst.}\ }\textbf
  {\bibinfo {volume} {PAS-91}},\ \bibinfo {pages} {1093} (\bibinfo {year}
  {1972})}\BibitemShut {NoStop}%
\bibitem [{\citenamefont {Tavora}\ and\ \citenamefont {Smith}(1972)}]{Tav72}%
  \BibitemOpen
  \bibfield  {author} {\bibinfo {author} {\bibfnamefont {C.~J.}\ \bibnamefont
  {Tavora}}\ and\ \bibinfo {author} {\bibfnamefont {O.~J.~M.}\ \bibnamefont
  {Smith}},\ }\href
  {http://ieeexplore.ieee.org/xpls/abs_all.jsp?arnumber=4074831} {\bibfield
  {journal} {\bibinfo  {journal} {IEEE Trans. Power App. Syst.}\ }\textbf
  {\bibinfo {volume} {PAS-91}},\ \bibinfo {pages} {1138} (\bibinfo {year}
  {1972})}\BibitemShut {NoStop}%
\bibitem [{\citenamefont {Bergen}\ and\ \citenamefont {Vittal}(2000)}]{Ber00}%
  \BibitemOpen
  \bibfield  {author} {\bibinfo {author} {\bibfnamefont {A.~R.}\ \bibnamefont
  {Bergen}}\ and\ \bibinfo {author} {\bibfnamefont {V.}~\bibnamefont
  {Vittal}},\ }\href@noop {} {\emph {\bibinfo {title} {Power {Systems}
  {Analysis}}}}\ (\bibinfo  {publisher} {Prentice Hall},\ \bibinfo {year}
  {2000})\BibitemShut {NoStop}%
\bibitem [{\citenamefont {Skar}(1980)}]{Ska80}%
  \BibitemOpen
  \bibfield  {author} {\bibinfo {author} {\bibfnamefont {S.~J.}\ \bibnamefont
  {Skar}},\ }\emph {\bibinfo {title} {{Stability} of {Power} {Systems} and
  other {Systems} of {Second} {Order} {Differential} {Equations}}},\ \href@noop
  {} {Ph.D. thesis},\ \bibinfo  {school} {Iowa State University} (\bibinfo
  {year} {1980})\BibitemShut {NoStop}%
\bibitem [{\citenamefont {Araposthatis}\ \emph {et~al.}(1981)\citenamefont
  {Araposthatis}, \citenamefont {Sastry},\ and\ \citenamefont
  {Varayia}}]{Ara81}%
  \BibitemOpen
  \bibfield  {author} {\bibinfo {author} {\bibfnamefont {A.}~\bibnamefont
  {Araposthatis}}, \bibinfo {author} {\bibfnamefont {S.}~\bibnamefont
  {Sastry}}, \ and\ \bibinfo {author} {\bibfnamefont {P.}~\bibnamefont
  {Varayia}},\ }\href@noop {} {\bibfield  {journal} {\bibinfo  {journal} {Int.
  J. Elect. Power Energy Syst.}\ }\textbf {\bibinfo {volume} {3}},\ \bibinfo
  {pages} {115} (\bibinfo {year} {1981})}\BibitemShut {NoStop}%
\bibitem [{\citenamefont {Bailleul}\ and\ \citenamefont
  {Byrnes}(1982)}]{Bai82}%
  \BibitemOpen
  \bibfield  {author} {\bibinfo {author} {\bibfnamefont {J.}~\bibnamefont
  {Bailleul}}\ and\ \bibinfo {author} {\bibfnamefont {C.~I.}\ \bibnamefont
  {Byrnes}},\ }\href@noop {} {\bibfield  {journal} {\bibinfo  {journal} {Proc.
  of the 21$^{\rm st}$ IEEE Conf. on Decision and Control}\ }\textbf {\bibinfo
  {volume} {2}},\ \bibinfo {pages} {919} (\bibinfo {year} {1982})}\BibitemShut
  {NoStop}%
\bibitem [{\citenamefont {Tamura}\ \emph {et~al.}(1983)\citenamefont {Tamura},
  \citenamefont {Mori},\ and\ \citenamefont {Iwamoto}}]{Tam83}%
  \BibitemOpen
  \bibfield  {author} {\bibinfo {author} {\bibfnamefont {Y.}~\bibnamefont
  {Tamura}}, \bibinfo {author} {\bibfnamefont {H.}~\bibnamefont {Mori}}, \ and\
  \bibinfo {author} {\bibfnamefont {S.}~\bibnamefont {Iwamoto}},\ }\href
  {\doibase 10.1109/TPAS.1983.318052} {\bibfield  {journal} {\bibinfo
  {journal} {IEEE Trans. Power App. Syst.}\ }\textbf {\bibinfo {volume}
  {PAS-102}},\ \bibinfo {pages} {1115} (\bibinfo {year} {1983})}\BibitemShut
  {NoStop}%
\bibitem [{\citenamefont {Klos}\ and\ \citenamefont {Wojcicka}(1991)}]{Klos91}%
  \BibitemOpen
  \bibfield  {author} {\bibinfo {author} {\bibfnamefont {A.}~\bibnamefont
  {Klos}}\ and\ \bibinfo {author} {\bibfnamefont {J.}~\bibnamefont
  {Wojcicka}},\ }\href {\doibase
  http://dx.doi.org/10.1016/0142-0615(91)90050-6} {\bibfield  {journal}
  {\bibinfo  {journal} {Int. J. Elect. Power Energy Syst.}\ }\textbf {\bibinfo
  {volume} {13}},\ \bibinfo {pages} {268 } (\bibinfo {year}
  {1991})}\BibitemShut {NoStop}%
\bibitem [{\citenamefont {Nguyen}\ and\ \citenamefont
  {Turitsyn}(2014)}]{Ngu14}%
  \BibitemOpen
  \bibfield  {author} {\bibinfo {author} {\bibfnamefont {H.~D.}\ \bibnamefont
  {Nguyen}}\ and\ \bibinfo {author} {\bibfnamefont {K.~S.}\ \bibnamefont
  {Turitsyn}},\ }in\ \href {\doibase 10.1109/PESGM.2014.6938797} {\emph
  {\bibinfo {booktitle} {PES General Meeting - Conference Exposition, 2014
  IEEE}}}\ (\bibinfo {year} {2014})\BibitemShut {NoStop}%
\bibitem [{\citenamefont {Janssens}\ and\ \citenamefont
  {Kamagate}(2003)}]{Jan03}%
  \BibitemOpen
  \bibfield  {author} {\bibinfo {author} {\bibfnamefont {N.}~\bibnamefont
  {Janssens}}\ and\ \bibinfo {author} {\bibfnamefont {A.}~\bibnamefont
  {Kamagate}},\ }\href@noop {} {\bibfield  {journal} {\bibinfo  {journal} {Int.
  J. Elect. Power Energy Syst.}\ }\textbf {\bibinfo {volume} {25}},\ \bibinfo
  {pages} {591 } (\bibinfo {year} {2003})}\BibitemShut {NoStop}%
\bibitem [{\citenamefont {Coletta}\ \emph {et~al.}(2016)\citenamefont
  {Coletta}, \citenamefont {Delabays}, \citenamefont {Adagideli},\ and\
  \citenamefont {Jacquod}}]{Col16b}%
  \BibitemOpen
  \bibfield  {author} {\bibinfo {author} {\bibfnamefont {T.}~\bibnamefont
  {Coletta}}, \bibinfo {author} {\bibfnamefont {R.}~\bibnamefont {Delabays}},
  \bibinfo {author} {\bibfnamefont {I.}~\bibnamefont {Adagideli}}, \ and\
  \bibinfo {author} {\bibfnamefont {P.}~\bibnamefont {Jacquod}},\ }\href@noop
  {} {\bibfield  {journal} {\bibinfo  {journal} {New J. Phys.}\ }\textbf
  {\bibinfo {volume} {18}} (\bibinfo {year} {2016})}\BibitemShut {NoStop}%
\bibitem [{\citenamefont {Ochab}\ and\ \citenamefont {G\'ora}(2010)}]{Och10}%
  \BibitemOpen
  \bibfield  {author} {\bibinfo {author} {\bibfnamefont {J.}~\bibnamefont
  {Ochab}}\ and\ \bibinfo {author} {\bibfnamefont {P.~F.}\ \bibnamefont
  {G\'ora}},\ }\href@noop {} {\bibfield  {journal} {\bibinfo  {journal} {Acta
  Phys. Pol. B [Proc. Suppl. 3]}\ }\textbf {\bibinfo {volume} {3}},\ \bibinfo
  {pages} {453} (\bibinfo {year} {2010})}\BibitemShut {NoStop}%
\bibitem [{\citenamefont {Tilles}\ \emph {et~al.}(2011)\citenamefont {Tilles},
  \citenamefont {Ferreira},\ and\ \citenamefont {Cerdeira}}]{Til11}%
  \BibitemOpen
  \bibfield  {author} {\bibinfo {author} {\bibfnamefont {P.~F.~C.}\
  \bibnamefont {Tilles}}, \bibinfo {author} {\bibfnamefont {F.~F.}\
  \bibnamefont {Ferreira}}, \ and\ \bibinfo {author} {\bibfnamefont {H.~A.}\
  \bibnamefont {Cerdeira}},\ }\href
  {http://link.aps.org/doi/10.1103/PhysRevE.83.066206} {\bibfield  {journal}
  {\bibinfo  {journal} {Phys. Rev. E}\ }\textbf {\bibinfo {volume} {83}}
  (\bibinfo {year} {2011})}\BibitemShut {NoStop}%
\bibitem [{\citenamefont {Roy}\ and\ \citenamefont {Lahiri}(2012)}]{Roy12}%
  \BibitemOpen
  \bibfield  {author} {\bibinfo {author} {\bibfnamefont {T.~K.}\ \bibnamefont
  {Roy}}\ and\ \bibinfo {author} {\bibfnamefont {A.}~\bibnamefont {Lahiri}},\
  }\href {\doibase 10.1016/j.chaos.2012.03.004} {\bibfield  {journal} {\bibinfo
   {journal} {Chaos, Solitons \& Fractals}\ }\textbf {\bibinfo {volume} {45}},\
  \bibinfo {pages} {888} (\bibinfo {year} {2012})}\BibitemShut {NoStop}%
\bibitem [{\citenamefont {Taylor}(2012)}]{Tay12}%
  \BibitemOpen
  \bibfield  {author} {\bibinfo {author} {\bibfnamefont {R.}~\bibnamefont
  {Taylor}},\ }\href@noop {} {\bibfield  {journal} {\bibinfo  {journal} {J.
  Phys. A}\ }\textbf {\bibinfo {volume} {45}},\ \bibinfo {pages} {055102}
  (\bibinfo {year} {2012})}\BibitemShut {NoStop}%
\bibitem [{\citenamefont {Do}\ \emph {et~al.}(2012)\citenamefont {Do},
  \citenamefont {Boccaletti},\ and\ \citenamefont {Gross}}]{Do12}%
  \BibitemOpen
  \bibfield  {author} {\bibinfo {author} {\bibfnamefont {A.-L.}\ \bibnamefont
  {Do}}, \bibinfo {author} {\bibfnamefont {S.}~\bibnamefont {Boccaletti}}, \
  and\ \bibinfo {author} {\bibfnamefont {T.}~\bibnamefont {Gross}},\
  }\href@noop {} {\bibfield  {journal} {\bibinfo  {journal} {Phys. Rev. Lett.}\
  }\textbf {\bibinfo {volume} {108}},\ \bibinfo {pages} {194102} (\bibinfo
  {year} {2012})}\BibitemShut {NoStop}%
\bibitem [{\citenamefont {Biggs}(1993)}]{Big93}%
  \BibitemOpen
  \bibfield  {author} {\bibinfo {author} {\bibfnamefont {N.}~\bibnamefont
  {Biggs}},\ }\href@noop {} {\emph {\bibinfo {title} {Algebraic graph
  theory}}},\ \bibinfo {edition} {2nd}\ ed.\ (\bibinfo  {publisher} {Cambridge
  University Press},\ \bibinfo {year} {1993})\BibitemShut {NoStop}%
\bibitem [{\citenamefont {Horn}\ and\ \citenamefont {Johnson}(1986)}]{Hor86}%
  \BibitemOpen
  \bibfield  {author} {\bibinfo {author} {\bibfnamefont {R.~A.}\ \bibnamefont
  {Horn}}\ and\ \bibinfo {author} {\bibfnamefont {C.~R.}\ \bibnamefont
  {Johnson}},\ }\href@noop {} {\emph {\bibinfo {title} {Matrix Analysis}}}\
  (\bibinfo  {publisher} {Cambridge University Press},\ \bibinfo {address} {New
  York},\ \bibinfo {year} {1986})\BibitemShut {NoStop}%
\bibitem [{\citenamefont {Manik}\ \emph {et~al.}(2016)\citenamefont {Manik},
  \citenamefont {Timme},\ and\ \citenamefont {Witthaut}}]{Man16}%
  \BibitemOpen
  \bibfield  {author} {\bibinfo {author} {\bibfnamefont {D.}~\bibnamefont
  {Manik}}, \bibinfo {author} {\bibfnamefont {M.}~\bibnamefont {Timme}}, \ and\
  \bibinfo {author} {\bibfnamefont {D.}~\bibnamefont {Witthaut}},\ }\href
  {http://arxiv.org/abs/1611.09825} {\bibfield  {journal} {\bibinfo  {journal}
  {{arXiv}:1611.09825}\ } (\bibinfo {year} {2016})}\BibitemShut {NoStop}%
\end{thebibliography}%

\end{document}